\renewcommand*\env@matrix[1][*\c@MaxMatrixCols c]{%
  \hskip -\arraycolsep
  \let\@ifnextchar\new@ifnextchar
  \array{#1}}
\DeclareSymbolFont{symbols2}{LS1}{stixfrak}{m}{n}
\DeclareMathSymbol{\typecolon}{\mathbin}{symbols2}{"25}
\newcommand{\myrightleftarrows}[1]{\mathrel{\substack{\xrightarrow{#1} \\[-.9ex] \xleftarrow{#1}}}}
\newcommand{\C}{\mathbb{C}}
\newcommand{\1}{\mathds{1}}
\newcommand{\N}{\mathbb{N}}
\newcommand{\Z}{\mathbb{Z}}
\newcommand{\F}{\mathbb{F}}
\newcommand{\s}[1]{\mathcal{#1}}
\newcommand{\mr}[1]{\mathrm{#1}}
\newcommand{\wt}[1]{\widetilde{#1}}
\newcommand{\wh}[1]{\widehat{#1}}
\begin{document}
\title{Algebraic Structures In Closed Superstring Field Theory, Homotopy Transfer And Effective Actions}
%\author[a]{Ashoke Sen,}
\author[a]{Ranveer Kumar Singh} 
%\affiliation[a]{International Centre for Theoretical Sciences - TIFR, Bengaluru - 560089, India}
\affiliation[a]{NHETC and Department of Physics and Astronomy, Rutgers University, 126
Frelinghuysen Rd., Piscataway NJ 08855, USA}
\emailAdd{ranveersfl@gmail.com}
%\emailAdd{ashoke.sen@icts.res.in}
\abstract{A consistent action for heterotic and type II superstring field theory was recently proposed by Sen. We give an algebraic formulation of this action in terms of certain twisted $L_\infty$-algebra. We further show that Sen's Wilsonian effective superstring field action can be obtained using homotopy transfer and the effective theory also possesses the algebraic structure of a twisted $L_\infty$-algebra.} 
\maketitle
\section{Introduction and summary}
String field theory is an attempt to formulate a non-perturbative description of string theory. Consistent construction of string field theories exists for all string theories: bosonic open and closed strings \cite{Siegel:1984ogw,Kaku1974FieldTO,Witten:1985cc,Zwiebach:1992ie,Zwiebach:1997fe}, heterotic strings and type II strings \cite{Sen:2015uaa,Sen:2015hha,Sen:2014dqa} and open-closed superstrings \cite{FarooghMoosavian:2019yke}, see reviews \cite{Erbin:2021smf,Erler:2019loq,deLacroix:2017lif,Sen:2024nfd} for a comprehensive introduction\footnote{See also \cite{Maccaferri:2023vns} for a quick exposure to the recent results in string field theory}. Most of these string field theories are succinctly described in terms of algebraic structures called \textit{homotopy algebras} first introduced in mathematics by J. Stasheff \cite{58e41a3e-2db4-3e2f-8edb-8effbf4ad0ce}. The two basic structures describing open and closed string field theories are $A_\infty$ and $L_\infty$-algebras \cite{Zwiebach:1992ie,Zwiebach:1997fe,Gaberdiel:1997ia,Kajiura:2003ax,Kajiura:2004xu,Markl:1997bj,Kajiura:2006mt,Kajiura:2001ng}. The basic structure for these algebras is the existence of $n$-multilinear maps, called $n$-ary products (or string products) and denoted by $\{m_n\}_{n\geq 1}$ for $A_\infty$ algebras and $\{\ell_n\}_{n\geq 1}$ for $L_\infty$-algebras,  which satisfy a \textit{strong homotopy Jacobi identity} which generalises associativity. String vertices in closed and open string theories satisfy similar identity, called the \textit{main identity} in string field theory literature. These string vertices can then be used to define \textit{string products} which then give rise to homotopy algebras. \par Formulation of string field theory opens up possibilities to apply the methods of quantum field theory to resolve various problems in perturbative string theory. In particular, string field theory has been used to regularise IR divergences in string theory and compute D-instanton amplitudes \cite{Sen:2014pia,Sen:2015hia,Sen:2016qap,Sen:2019qqg,Sen:2020eck,Sen:2021jbr,Sen:2021qdk,Sen:2021tpp,Alexandrov:2021shf}. Another important application is to write down Wilsonian effective superstring actions by integrating out heavy modes to get an effective action for light fields. By integrating out massive modes in open (super)string field theory, one can extract the (super)Yang-Mills action \cite{Berkovits:2003ny,Asada:2017ykq}. It turns out that effective actions can be understood in terms of \textit{homotopy transfer} of homotopy algebras. For string field theories based on $A_\infty$ and $L_\infty$-algebras, a beautiful demonstration of this procedure was given in \cite{Erbin:2020eyc} (see also \cite{Arvanitakis:2020rrk,Arvanitakis:2021ecw}). \footnote{See \cite{Maccaferri:2021lau,Maccaferri:2023gcg,Maccaferri:2023gof,doubek2019quantum,Jurco:2018sby,Jurco:2020yyu,Macrelli:2019afx,Borsten:2021hua} for various applications of homotopy transfer in string field theory.}. 
A broad brush sketch of this story goes as follows: Let $\s{H}$ be the (super)string Hilbert space.
\begin{enumerate}
    \item One begins by formulating the $A_\infty$ ($L_\infty$)-algebra underlying the theory in terms of tensor colagebra $T\s{H}=\bigoplus_{n\geq 0}\s{H}^{\otimes n}$ (symmetrized tensor coalgebra $S\s{H}=\bigoplus_{n\geq 0}\s{H}^{\wedge n}$), see Appendix \ref{app:coalg} for definitions. The string products $m_n:\s{H}^{\otimes n}\to\s{H}$ ($\ell_n:\s{H}^{\wedge n}\to\s{H}$) on $\s{H}$ then extends \footnote{See \eqref{eq:coderbmci} for the construction of this extension to $S\s{H}$.} to coderivations $\bm{m}_n:T\s{H}\to T\s{H}$ ($\bm{\ell}_n:S\s{H}\to S\s{H}$). The strong homotopy Jacobi identity of the $n$-ary product is then equivalent to the \textit{total $n$-ary product} $\bm{m}=\sum_{n\geq 1}\bm{m}_n$ ($\bm{\ell}=\sum_{n\geq 1}\bm{\ell}_n$) being nilpotent: $\bm{m}^2=0$ ($\bm{\ell}^2=0$). 
    \item The free theory is described by a (symmetrized) tensor coalgebra with only the 1-ary product, which is nilpotent. In string field theory, this nilpotent operator is the BRST charge $m_1:=Q_B$ ($\ell_1:=Q_B$). 
    \item Suppose $P$ is the projection operator such that the Wilsonian effective action is obtained by integrating out states not in the image of $P$. Then the Wilsonian effective free theory is trivially obtained by mapping the BRST charge to its compression \footnote{The compression of a linear operator $T$ on a Hilbert space $H$ to a subspace $K$ is the operator
$P_K T|_K: K \rightarrow K,$
where $P_K: H \rightarrow K$ is the orthogonal projection onto $K$.} $PQ_B|_{P\s{H}}$. The free theory and the free effective theory can be described by a \textit{strong deformation retract} (SDR) $(T\s{H},\bm{Q_B})\to (TP\s{H},\mathbf{P}\bm{Q_B}|_{TP\s{H}})$ in case of $A_\infty$-algebra and $(S\s{H},\bm{Q_B})\to (SP\s{H},\mathbf{P}\bm{Q_B}|_{SP\s{H}})$ in case of $L_\infty$-algebra. An important axiom for an SDR is the nilpotency of the operators: 
    \begin{equation}
        \begin{split}
          &\bm{Q_B}^2=0,\quad (\mathbf{P}\bm{Q_B}|_{TP\s{H}})^2=0~,\\&\bm{Q_B}^2=0,\quad (\mathbf{P}\bm{Q_B}|_{SP\s{H}})^2=0~.
        \end{split}
    \end{equation}
    \item The interacting theory is obtained from the free theory by introducing a perturbation $\delta\bm{m}=\sum_{n\geq 2}\bm{m}_n$ for $A_\infty$-algebra and $\delta\bm{\ell}=\sum_{n\geq 2}\bm{\ell}_n$ for $L_\infty$-algebra. The main step is the application of the \textit{homological perturbation lemma} which gives a new SDR if the perturbation satisfies certain conditions, see \cite{Erbin:2020eyc} for the precise statement. Thus we obtain a new SDR $(T\s{H},\bm{Q_B}+\delta\bm{m})\to (TP\s{H},\mathbf{P}\bm{Q_B}|_{TP\s{H}}+\delta\wt{\bm{m}})$ in case of $A_\infty$-algebra and $(S\s{H},\bm{Q_B}+\delta\bm{\ell})\to (SP\s{H},\mathbf{P}\bm{Q_B}|_{SP\s{H}}+\delta\wt{\bm{\ell}})$ in case of $L_\infty$-algebra. The $n$-ary products $\{\wt{m}_n\}_{n\geq 1},\{\wt{\ell}_n\}_{n\geq 1}$ for the effective theory is given by the expansion
    \begin{equation}
     \begin{split}
         &\mathbf{P}\bm{Q_B}|_{TP\s{H}}+\delta\wt{\bm{m}}=\sum_{n\geq 1}\wt{\bm{m}}_n~,\\&\mathbf{P}\bm{Q_B}|_{SP\s{H}}+\delta\wt{\bm{\ell}}=\sum_{n\geq 1}\wt{\bm{\ell}}_n.
     \end{split}   
    \end{equation}
They satisfy the strong homotopy Jacobi identity by the axioms of an SDR. The effective action can then be written using the effective string products. This procedure of constructing new homotopy algebra from old is called homotopy transfer.
\end{enumerate}
The aim of this paper is to demonstrate the same for heterotic and type II superstring field theory. The procedure of integrating out heavy modes to obtain Wilsonian effective action for light modes in superstring field theory was given in \cite{Sen:2016qap}. 
\par 
Algebraic construction of various superstring field theories based on homotopy algebras appeared in \cite{Kunitomo:2021wiz,Kunitomo:2022qqp,Kunitomo:2019glq,Erler:2017onq,Konopka:2016grr,Erler:2016ybs,Kunitomo:2015usa} and a formulation of Sen's heterotic string field action in terms of standard $L_\infty$-algebra appeared in \cite{Erler:2017pgf} (see also \cite{Erler:2019loq}). It turns out that the formulations in some of these references (for example \cite{Erler:2017pgf}), though capture Sen's formulation of the heterotic superstring action, they are not suitable for the application of homotopy transfer to obtain Sen's Wilsonian effective actions. 
\par In this paper, we define the notion of a $g$-twisted $L_\infty$-algebra which give a uniform algebraic formulation of Sen's heterotic and type II string field action.  A $g$-twisted $L_\infty$-algebra consists of the following data:
\begin{enumerate}
    \item A $\Z$-graded vector space $V$ with dual denoted by $V^\star$.
    \item A linear map $g:V^\star\longrightarrow V$.
    \item A family of maps $\{\ell_n\}_{n=1}^\infty$ where $\ell_1:V\oplus V^\star\to V\oplus V^\star$ is nilpotent and $\ell_n:V^{\wedge n}\to V^\star$ is multilinear. Here $V^{\wedge n}$ is the degree $n$ symmetrized tensor product of $V$, see \eqref{eq:symm_tensor_vn} for the precise definition. 
\end{enumerate}
This data satisfies the strong homotopy Jacobi identity, see Definition \ref{def:g_twisted_L_infty} for details. In particular, $\ell_1$ is nilpotent. \par The  
description of heterotic and type II superstring theory given in \cite{Sen:2015uaa,deLacroix:2017lif,FarooghMoosavian:2019yke} is naturally captured by a $\s{G}$-twisted $L_\infty$-algebra where $\s{G}$ is given in terms of the zero-mode of the picture changing operator. The vector space $V$ is taken to be the subspace of the string Hilbert space consisting of states with picture number
\begin{equation}
\begin{split}
    -1,-\frac{1}{2}\quad&\text{in heterotic}~,\\
    (-1,-1),(-\frac{1}{2},-1),(-1,-\frac{1}{2}),(-\frac{1}{2},-\frac{1}{2})\quad &\text{in type II}~,
\end{split}    
\end{equation}
see Section \ref{sec:2.1} for details. The BPZ inner product then identifies the dual $V^\star$ this subspace with another subspace consisting of states with picture number 
\begin{equation}
\begin{split}
    -1,-\frac{3}{2}\quad&\text{in heterotic}~,\\
    (-1,-1),(-\frac{3}{2},-1),(-1,-\frac{3}{2}),(-\frac{3}{2},-\frac{3}{2})\quad &\text{in type II}~.
\end{split}    
\end{equation}
The zero-mode $\s{G}$ of the picture changing operator is then the linear map $\s{G}:V^\star\to V$. The BRST charge $Q_B$ is the linear map $\ell_1$ and the $n$-string vertices (see Section \ref{sec:2.2}) are used to construct the multilinear maps $\ell_n$. The main identity (see \eqref{eq:mainid{}}) for the string vertices is then equivalent to the strong homotopy Jacobi identity for $\ell_n$. The details of the proof are presented in Section \ref{sec:2.4}.
\par 
We recast the discussion of Wilsonian effective action in terms of an SDR and then apply a slightly modified version of the homological perturbation lemma, appropriate for application to the present case, to show that Sen's Wilsonian effective action in \cite{Sen:2016qap} can be obtained using homotopy transfer. 
\par
We close this section by listing two potential future directions. 
\begin{enumerate}
\item The present formulation of Wilsonian effective action in terms of homotopy transfer is only for the classical string field action, i.e., the string vertices are calculated using only genus zero correlators of the string worldsheet. The quantum superstring field action includes all higher genus contributions to the string vertex and the main identity \eqref{eq:mainid{}} also receives contributions from integrals over other patches of the moduli space. The formulation of the quantum superstring action will require appropriate generalisation of the twisted $L_\infty$-algebra to \textit{quantum twisted $L_\infty$-algebra} in analogy to the standard $L_\infty$ case \cite{Zwiebach:1992ie,Sen:2024nfd}. It would also be desirable to understand the quantum Wilsonian effective action in terms of homotopy transfer.   
\item A successful formulation of quantum superstring field theory action and quantum Wilsonian effective action in terms of a quantum twisted $L_\infty$-algebra opens up the possibility of computing higher-derivative corrections to low energy supergravity using algebraic methods. It would be extremely interesting if homotopy transfer could be applied to reproduce known higher-derivative corrections to low energy supergravity.
\end{enumerate}
The paper is organised as follows: in Section \ref{sec:2}, more specifically in Section \ref{sec:2.1} and Section \ref{sec:2.2} we review the basics of closed superstring field theory. In Section \ref{sec:genLinfalg} and Section \ref{sec:2.4}, we define the notion of $g$-twisted $L_\infty$-algebra and show that the closed superstring field theory can be described in terms of these homotopy algebras. In Section \ref{sec:3} we review Wilsonian effective theory and perform some preliminary calculations. Finally in Section \ref{sec:4} we show that the effective string products can be obtained using homotopy transfer. In Appendix \ref{app:coalg} we recall some basic notions about (symmetrised) tensor coalgebras and recast $L_\infty$-algebras in the language of symmetrised tensor coalgebras. In Appendix \ref{app:sdr} we review the definition of an SDR and state the version of homological perturbation lemma used in this paper to obtain effective string products.   
\section{Closed superstring field theory as an $L_\infty$-algebra}\label{sec:2}
\subsection{The Hilbert space of superstring theory}\label{sec:2.1}
Let us recall the basic ingredients of closed superstring theory. We follow \cite{deLacroix:2017lif,Sen:2015uaa} for this exposition.\\\\
Let $\s{H}_T$ be the Hilbert space of GSO even states in the small Hilbert space \cite{Friedan:1985ey,Friedan:1985ge} of the matter-ghost CFT with arbitrary ghost and picture number satisfying
\begin{equation}\label{eq:smhilsp}
b_0^{-}|s\rangle=0, \quad L_0^{-}|s\rangle=0, \quad|s\rangle \in \mathcal{H}_T~,
\end{equation}
where we have defined 
\begin{equation}
b_0^{ \pm}=b_0 \pm \bar{b}_0, \quad L_0^{ \pm}=L_0 \pm \bar{L}_0, \quad c_0^{ \pm}=\frac{1}{2}\left(c_0 \pm \bar{c}_0\right) .
\end{equation}
$\s{H}_T$ decomposes into NS and R sectors for the heterotic string and into four sectors for type II:
\begin{equation}
\begin{array}{ll}
\s{H}_T=\s{H}_{\mr{NS}} \oplus \s{H}_{\mr{R}}, &\quad \text { Heterotic }, \\
\s{H}_T=\s{H}_{\mr{NSNS}} \oplus \s{H}_{\mr{NSR}} \oplus \s{H}_{\mr{RNS}} \oplus \s{H}_{\mr{RR}}, &\quad \text { Type II }.
\end{array}
\end{equation}
For heterotic string, let $\s{H}_p$ be the subspace of $\s{H}_T$ with picture number $p$. For NS sector, $p \in \mathbb{Z}$ and for R sector $p \in \mathbb{Z}+\frac{1}{2}$. For type II, let $\mathcal{H}_{p, q}$ denote the subspace of $\s{H}_T$ with left moving picture number $p$ and right-moving picture number $q$. Define the spaces
\begin{equation}\label{eq:picnumHTTtilT}
\begin{split}
&\widehat{\s{H}}_T=\begin{cases}
\s{H}_{-1} \oplus \s{H}_{-\frac{1}{2}},&\quad\text{Heterotic},\\   \s{H}_{-1,-1} \oplus \s{H}_{-\frac{1}{2},-1} \oplus \mathcal{H}_{-1,-\frac{1}{2}} \oplus \mathcal{H}_{-\frac{1}{2},-\frac{1}{2}},&\quad\text{Type II}, 
\end{cases}\\&\widetilde{\s{H}}_T=\begin{cases}
\s{H}_{-1} \oplus \s{H}_{-\frac{3}{2}},&\quad\text{Heterotic},\\   \s{H}_{-1,-1} \oplus \mathcal{H}_{-\frac{3}{2},-1} \oplus \mathcal{H}_{-1,-\frac{3}{2}} \oplus \mathcal{H}_{-\frac{3}{2},-\frac{3}{2}}, &\quad\text{Type II}. 
\end{cases}
\end{split}
\end{equation}
We will denote the Hilbert space of the theory by 
\begin{equation}
    \s{H}_T:=\widetilde{\s{H}}_T\oplus\widehat{\s{H}}_T.
\end{equation}
Let $\left\{\left|\varphi_r\right\rangle\right\} \subset \widehat{\s{H}}_T,\left\{\left|\varphi_r^c\right\rangle\right\} \subset \widetilde{\s{H}}_T$ be a basis of $\widehat{\s{H}}_T$ and $\widetilde{\s{H}}_T$ satisfying
\begin{equation}\label{eq:inprodphirc}
\left\langle\varphi_r^c\left|c_0^{-}\right| \varphi_s\right\rangle=\delta_{r s}, \quad\left\langle\varphi_s\left|c_0^{-}\right| \varphi_r^c\right\rangle=\delta_{r s}~.
\end{equation}
Here $\langle\cdot|\cdot\rangle$ is the BPZ inner product and the $c_0^-$ insertion is needed for a nonzero inner product because of the anticommutator $\{b_0^-,c_0^-\}=1$ and the condition \eqref{eq:smhilsp}. 
The first relation in \eqref{eq:inprodphirc} implies the second relation in \eqref{eq:inprodphirc} and also a completeness relation
\begin{equation}
\sum_r\left|\varphi_r \rangle\langle \varphi_r^c\right| c_0^{-}=\mathds{1},\quad \sum_r\left|\varphi_r^c \rangle\langle \varphi_r\right| c_0^{-}=\mathds{1} .
\end{equation}
The basis states $\varphi_r, \varphi_r^c$ carry nontrivial grassmann parities which we shall denote by $(-1)^{\gamma_r}$ and $(-1)^{\gamma_r^c}$ respectively. In the NS sector of the heterotic string and NSNS and RR sector of type II, the grassmann parity of $\varphi_r$ and $\varphi_r^c$ is odd and even respectively if the ghost number of $\varphi_r$ and $\varphi_r^c$ is odd and even respectively. In the R sector of the heterotic string and NSR and RNS sector of type II, the grassmann parity of $\varphi_r$ and $\varphi_r^c$ is odd and even respectively if the ghost number of $\varphi_r$ and $\varphi_r^c$ is even and odd respectively. Moreover one has
\begin{equation}
(-1)^{\gamma_r+\gamma_r^c}=-1~.
\end{equation}
Introduce the zero modes of the PCOs:
\begin{equation}
\s{X}_0 \equiv \oint \frac{d z}{z} \s{X}(z), \quad \bar{\s{X}}_0=\oint \frac{d \bar{z}}{\bar{z}} \bar{\s{X}}(\bar{z}), 
\end{equation}
where the contour integral is normalised such that 
\begin{equation}
    \oint_C \frac{dz}{z}=1,\quad \oint_{\bar{C}} \frac{d\bar{z}}{\bar{z}}=1~,
\end{equation}
and $C,\Bar{C}$ is a contour around $z=0,\bar{z}=0$ respectively, see \cite{deLacroix:2017lif} for the definition of the operator $\s{X}$. 
Define the operator $\s{G}$ as
\begin{equation}
\s{G}= \begin{cases}\mathds{1} \oplus \s{X}_0, & \mathcal{H}_T=\mathcal{H}_{\mr{N S}} \oplus \mathcal{H}_{\mr{R}}~, \\ \mathds{1} \oplus \s{X}_0 \oplus \bar{\s{X}}_0 \oplus \s{X}_0 \bar{\s{X}}_0, & \s{H}_T=\s{H}_{\mr{NSNS}} \oplus \s{H}_{\mr{RNS}} \oplus \s{H}_{\mr{NSR}} \oplus \s{H}_{\mr{RR}}.\end{cases}
\end{equation}
on the heterotic and type II Hilbert space. One has that
\begin{equation}
\left[\s{G}, L_0^{ \pm}\right]=0,\quad\left[\s{G}, b_0^{ \pm}\right]=0,\quad\left[\s{G}, Q_B\right]=0 \text {. }
\end{equation}
It also satisfies
\begin{equation}\label{eq:Ghermbpz}
    \langle \wt{\Psi}|c_0^-\s{G}|\wt{\Psi}'\rangle=\langle \s{G}\wt{\Psi}|c_0^-|\wt{\Psi}'\rangle~,\quad |\wt{\Psi}\rangle,|\wt{\Psi}'\rangle\in \wt{\s{H}}_T.
\end{equation}
\subsection{Superstring field theory action}\label{sec:2.2}
A string field is a grassmann even element of $|\Psi\rangle \in \widehat{\s{H}}_T$. We also need another non-dynamical grassmann even string field $|\widetilde{\Psi}\rangle \in \widetilde{\s{H}}_T$ to write down the action. The  1PI effective action is written as
\begin{equation}
S_{1PI}=\frac{1}{g_s^2}\left[-\frac{1}{2}\left\langle\widetilde{\Psi}\left|c_0^{-} Q_B \s{G}\right| \widetilde{\Psi}\right\rangle+\left\langle\widetilde{\Psi}\left|c_0^{-} Q_B\right| \Psi\right\rangle+\sum_{n=1}^{\infty} \frac{1}{n !}\left\{\Psi^n\right\}\right]~,
\end{equation}
where $\left\{\Psi^n\right\}$ is the $n$-string vertex summed over all 1PI Feynman diagrams, see \cite{deLacroix:2017lif,Sen:2015uaa} for precise definition in terms of integral over moduli space of Riemann surfaces. 
This action has infinite dimensional gauge invariance given by 
\begin{equation}
|\delta \Psi\rangle=Q_B|\Lambda\rangle+\sum_{n=0}^{\infty} \frac{1}{n !} \mathcal{G}\left[\Psi^n \Lambda\right], \quad|\delta \widetilde{\Psi}\rangle=Q_B|\widetilde{\Lambda}\rangle+\sum_{n=0}^{\infty} \frac{1}{n !}\left[\Psi^n \Lambda\right],   \end{equation}
where $\left[\Psi^n \Lambda\right]:=[\Psi\Psi\dots\Psi\Lambda]\in\widetilde{\s{H}}_T$ is defined by the relation 
\begin{equation}\label{eq:[]stdef}
\left\langle\Psi\left|c_0^{-} \right| [\Psi_1\Psi_2\dots\Psi_n]\right\rangle=\{\Psi\Psi_1\Psi_2\dots\Psi_n\},\quad \forall~~\Psi\in\widehat{\s{H}}_T.
\end{equation}
We now list some of the properties of the $n$-string vertex $\{\Psi_1\Psi_2\dots\Psi_n\}$ and the state $[\Psi_1\Psi_2\dots\Psi_n]$.
\begin{enumerate}
    \item 
Ghost number anomaly implies that the a correlation function of vertex operators on a genus $g$ Riemann surface is nonzero only if the total ghost number of vertex operators in the correlator is $6-6g$. This implies that the $n$-string vertex $\{\Psi_1\Psi_2\dots\Psi_n\}$ vanishes unless \cite[Eq. (3.7)]{deLacroix:2017lif} 
\begin{equation}
    \sum_{i=1}^n(\text{gh}(\Psi_i)-2)=0~,
\end{equation}
where $\text{gh}(\Psi_i)$ denotes the ghost number of $\Psi_i$. This means that on a genus $g$ surface
\begin{equation}\label{eq:ghostn[]}
    \text{gh}\left([\Psi_1\Psi_2\dots\Psi_n]\right)=3-6g+\sum_{i=1}^n(\text{gh}(\Psi_i)-2).
\end{equation}
\item \label{prop:grpr[]}
The grassmann parity of $[\Psi_1\Psi_2\dots\Psi_n]$ is opposite to the sum of the grassmann parities of $\Psi_i$'s. 
\item 
The following relations hold \cite{Sen:2015uaa}: 
\begin{equation}\label{eq:commin{}}
\begin{gathered}
\left\{\Psi_1 \Psi_2 \cdots \Psi_{i-1} \Psi_{i+1} \Psi_i \Psi_{i+2} \cdots \Psi_n\right\}=(-1)^{\gamma_i \gamma_{i+1}}\left\{\Psi_1 \Psi_2 \cdots \Psi_n\right\}~, \\
{\left[\Psi_1 \cdots \Psi_{i-1} \Psi_{i+1} \Psi_i \Psi_{i+2} \cdots \Psi_n\right]=(-1)^{\gamma_i \gamma_{i+1}}\left[\Psi_1 \cdots \Psi_n\right]~,}
\end{gathered}
\end{equation}
where $\gamma_i$ is the grassmannality of $\left|\Psi_i\right\rangle$. 
\item 
The following identity, called the \textit{main identity}, is satisfied \cite{Sen:2015uaa}:
\begin{equation}\label{eq:mainid{}}
\begin{aligned}
& \sum_{i=1}^n(-1)^{\gamma_1+\dots+ \gamma_{i-1}}\left\{\Psi_1 \cdots \Psi_{i-1}\left(Q_B \Psi_i\right) \Psi_{i+1} \cdots \Psi_n\right\} \\
= & -\frac{1}{2} \sum_{\substack{\ell, k \geq 2 \\
\ell+k=n}} \sum_{\substack{\left\{i_a ; a=1, \dots,\ell\right\},\left\{j_b ; b=1, \dots, k\right\} \\
\{i a\} \cup\left\{j_b\right\}=\{1, \dots, n\}}} \sigma\left(\left\{i_a\right\},\left\{j_b\right\}\right)\left\{\mathcal{G}\left[\Psi_{j_1} \cdots \Psi_{j_k}\right]\Psi_{i_1} \cdots \Psi_{i_{\ell}}\right\}~,
\end{aligned}
\end{equation}
and
\begin{equation}\label{eq:mainid[]}
\begin{aligned}
& Q_B\left[\Psi_1 \cdots \Psi_n\right]+\sum_{i=1}^n(-1)^{\gamma_1+\dots +\gamma_{i-1}}\left[\Psi_1 \cdots \Psi_{i-1}\left(Q_B \Psi_i\right) \Psi_{i+1} \cdots \Psi_n\right] \\
= & -\sum_{\substack{\ell, k \geq 2 \\
\ell+k=n}} \sum_{\substack{\left\{i_a ; a=1, \dots, \ell\right\},\left\{j_b ; b=1, \dots, k\right\} \\
\left\{i_a\right\} \cup\left\{j_b\right\}=\{1, \dots, n\}}} \sigma\left(\left\{i_a\right\},\left\{j_b\right\}\right)\left[\mathcal{G}\left[\Psi_{j_1} \cdots \Psi_{j_k}\right]\Psi_{i_1} \cdots \Psi_{i_{\ell}} \right]~,
\end{aligned}
\end{equation}
where $\sigma\left(\left\{i_a\right\},\left\{j_b\right\}\right)$ is the sign that we get while rearranging 
\begin{equation}
\Psi_1, \cdots \Psi_n\to\Psi_{j_1}, \cdots \Psi_{j_k}\Psi_{i_1}, \cdots \Psi_{i_{\ell}}, ~.   \end{equation}
Note in the inner sum above, the sum is only over partitions without permutations over the individual subpartitions. For example for $n=4$, the sum is over 
\begin{equation}
    \begin{split}
 \{\{1,2\},\{3,4\}\},\{\{1,3\},\{2,4\}\},\{\{1,2\},\{3,4\}\}, \{\{1\},\{2,3,4\}\},\{\{2\},\\\{1,3,4\}\},\{\{3\},\{1,2,4\}\},\{\{4\},\{1,2,3\}\},       
    \end{split}
\end{equation}
see the notion of unshuffles in \eqref{eq:unshuff} for a precise definition.
\item The following relation holds \cite{Sen:2015uaa}:
\begin{equation}\label{eq:[]in{}}
\begin{split}
\left\{\Psi_1 \cdots \Psi_k \mathcal{G}\left[\Psi'_1 \cdots \Psi'_{\ell}\right]\right\}&=(-1)^{\gamma+\gamma'+\gamma \gamma'}\left\{\Psi'_1 \cdots \Psi'_{\ell} \mathcal{G}\left[\Psi_1 \cdots \Psi_k\right]\right\}~,
\end{split}
\end{equation}
where $\gamma$ and $\gamma'$ are the total grassmannalities of $\Psi_1, \cdots \Psi_k$ and $\Psi'_1, \cdots \Psi'_{\ell}$ respectively. 
\end{enumerate}
The equation of motion is given by 
\begin{equation}
\begin{aligned}
& Q_B(|\Psi\rangle-\mathcal{G}|\widetilde{\Psi}\rangle)=0~, \\
& Q_B|\widetilde{\Psi}\rangle+\sum_{n=1}^{\infty} \frac{1}{(n-1) !}\left[\Psi^{n-1}\right]=0~.
\end{aligned}
\end{equation}
Acting on the second equation by $\mathcal{G}$ from the left and adding it to the first equation we get
\begin{equation}
Q_B|\Psi\rangle+\sum_{n=1}^{\infty} \frac{1}{(n-1) !} \mathcal{G}\left[\Psi^{n-1}\right]=0~.
\end{equation}
For classical action, the sum starts from $n=3$ since the one and two point function on sphere vanishes for all SCFTs. So the classical action, gauge transformation and equations of motion takes the form
\begin{equation}\label{eq:classaceomgtrans}
\begin{split}
\text{Action:}\quad&S_{\text{cl}}=\frac{1}{g_s^2}\left[-\frac{1}{2}\left\langle\widetilde{\Psi}\left|c_0^{-} Q_B \s{G}\right| \widetilde{\Psi}\right\rangle+\left\langle\widetilde{\Psi}\left|c_0^{-} Q_B\right| \Psi\right\rangle+\sum_{n=3}^{\infty} \frac{1}{n !}\left\{\Psi^n\right\}_0\right]~, \\\text{Gauge transformation:}\quad& |\delta \Psi\rangle=Q_B|\Lambda\rangle+\sum_{n=2}^{\infty} \frac{1}{n !} \mathcal{G}\left[\Psi^n \Lambda\right]_0, \quad|\delta \widetilde{\Psi}\rangle=Q_B|\widetilde{\Lambda}\rangle+\sum_{n=2}^{\infty} \frac{1}{n !}\left[\Psi^n \Lambda\right]_0,\\\text{Eq. of motion:}\quad & Q_B(|\Psi\rangle-\mathcal{G}|\widetilde{\Psi}\rangle)=0,\quad  Q_B|\widetilde{\Psi}\rangle+\sum_{n=3}^{\infty} \frac{1}{(n-1) !}\left[\Psi^{n-1}\right]_0=0, \\\implies \quad&Q_B|\Psi\rangle+\sum_{n=3}^{\infty} \frac{1}{(n-1) !} \mathcal{G}\left[\Psi^{n-1}\right]_0=0, 
\end{split}    
\end{equation}
where $\left\{\Psi^n\right\}_0$ denotes the genus zero $n$-string vertex and $[\Psi^n]_0$ is defined as in \eqref{eq:[]stdef} with the genus zero string vertex. It also satisfies the identities in \eqref{eq:mainid[]} and \eqref{eq:mainid{}}. We will omit the subscript 0 from now on. 
\subsection{$g$-twisted $L_\infty$-algebras}\label{sec:genLinfalg}
In this section, we slightly generalise the mathematical definition of $L_\infty$-algebras suitable for application to closed superstring field theory. We refer the reader to \cite{10.1007/BFb0101184,Zwiebach:1992ie} for material on $L_\infty$-algebras. \par For $p, q \in \mathbb{N}$, a $(p, q)$-unshuffle is a permutation
\begin{equation}\label{eq:unshuff}
\left(\mu_1, \cdots, \mu_p, \nu_1, \cdots, \nu_q\right)~,
\end{equation}
of $(1,2, \dots, p+q)$ such that
\begin{equation}
\mu_1<\mu_2<\cdots<\mu_p,\quad \nu_1<\nu_2<\cdots<v_q. 
\end{equation}
We denote the set of all $(p,q)$-unshuffles by UnShuff$(p,q)$. \\
Let $V=\oplus_{i\in\Z}V_i$ be a $\mathbb{Z}$-graded vector space. For a homogenous vector $v$, we denote the degree by $|v|$.
%For $\sigma\in S_n$, let $\epsilon(\sigma)$ be signature of the permutation\footnote{The signature of a permutation is 1 if it is a composition of even number of transpositions and $-1$ if it is the composition of odd number of transpositions.}.
Let 
\begin{equation}
\chi\left(\sigma, v_1, \cdots, v_n\right) \in\{-1,+1\}~,
\end{equation}
denote the products of $(-1)^{\left|v_i\right|\left|v_{i+1}\right|}$ for each interchange of neighbours 
\begin{equation}
\left(\dots, v_i, v_{i+1}, \dots\right)\longrightarrow\left(\dots, v_{i+1}, v_i, \dots\right)~,    
\end{equation}
involved in a decomposition of the permutation $\sigma$ into transpositions. Note that this is well defined since the number of transposition in a decomposition of a permutation is either even or odd.  \\Let $V^{\wedge n}$ denote the symmetrized tensor product obtained by linear span of vectors of the form 
\begin{equation}\label{eq:symm_tensor_vn}
    v_1\wedge\dots\wedge v_n:=\sum_{\sigma\in S_n}\chi(\sigma,v_1,\dots,v_n)(v_{\sigma(1)}\otimes\dots\otimes v_{\sigma(n)}).
\end{equation}
Clearly for any $\sigma\in S_n$
\begin{equation}\label{eq:gradsignprop}
v_{\sigma(1)}\wedge\dots\wedge v_{\sigma(n)}=\chi(\sigma,v_1,\dots,v_n)( v_1\wedge\dots\wedge v_n).    
\end{equation}
The symmetrized tensor product $V^{\wedge n}$ naturally acquires a grading from that of $V$ by the condition
\begin{equation}
    V_{i_1}\wedge\dots\wedge V_{i_n}\subset (V^{\wedge n})_{i_1+\dots+i_n}.
\end{equation}
A multilinear map $f:V^{\times n}\longrightarrow V$ is   %\footnote{Note that by the universal property of tensor products, a \textit{multilinear} map $f:V^{\times n}\longrightarrow V$ is same as a \textit{linear} map $f:V^{\wedge n}\longrightarrow V$. Thus we will use these notions interchangeably.} 
called \textit{graded-symmetric} if
\begin{equation}
f(v_{\sigma(1)},\dots, v_{\sigma(n)})=\chi(\sigma,v_1,\dots,v_n)( v_1,\dots, v_n),\quad \sigma\in S_n~.    
\end{equation}
By \eqref{eq:gradsignprop}, any 
multilinear map $f:V^{\wedge n}\longrightarrow V$ is automatically graded-symmetric. 
Let $c_j:V^{\wedge j}\longrightarrow V$ and $d_i:V^{\wedge i}\longrightarrow V$ be multilinear symmetric maps. We define the product $c_jd_i:V^{\wedge (i+j-1)}\longrightarrow V$ as follows: for $(i+j-1=n)$-tuples $\left(v_1, \cdots, v_n\right)$ of homogeneous vectors $v_i \in V_{\left|v_i\right|}$ 
\begin{equation}\label{eq:prodcidi}
\begin{split}
c_jd_i(v_1, \cdots, v_n)=\sum_{\sigma \in \operatorname{UnShuff}(i, j-1)} &\chi\left(\sigma, v_1, \cdots, v_n\right) \\& \times   
c_j\left(d_i\left(v_{\sigma(1)}, \cdots, v_{\sigma(i)}\right)), v_{\sigma(i+1)}, \cdots, v_{\sigma(n)}\right).
\end{split}
\end{equation}
This product can equivalently be written as 
\begin{equation}
    c_j d_i=c_j\circ\left(d_i \wedge \mathds{1}_{V^{\wedge j-1}}\right),
\end{equation}
where 
for multilinear maps $\alpha: V^{\wedge k} \longrightarrow V^{\wedge l},~ \beta: V^{\wedge m} \longrightarrow V^{\wedge n}$, we can define their wedge product
\begin{equation}
\alpha \wedge \beta: V^{\wedge k+m} \longrightarrow V^{\wedge l+n}~,
\end{equation}
by writing
\begin{equation}\label{eq:albewed}
\begin{aligned}
\alpha \wedge \beta\left(v_1, \ldots, v_{k+m}\right)=\sum_{\sigma \in \operatorname{UnShuff}(k, m)} &\chi\left(\sigma, v_1, \cdots, v_{k+m}\right)\times \\&    
\alpha\left(v_{\sigma(1)}, \cdots, v_{\sigma(k)}\right)\wedge\beta\left( v_{\sigma(k+1)}, \cdots, v_{\sigma(k+m)}\right), 
\end{aligned}
\end{equation}
and 
\begin{equation}
    \mathds{1}_{V^{\wedge n}}=\frac{1}{n !}\left(\mathds{1}_{V}\right)^{\wedge n}=\left(\mathds{1}_V\right)^{\otimes n}~,
\end{equation}
is the identity operator on $V^{\wedge n}$. We also define the commutator of two multilinear maps by 
\begin{equation}\label{eq:gradcommdef}
    [c_j,d_i]:=c_jd_i-(-1)^{|c_j||d_i|}d_ic_j~,
\end{equation}
where $|c_j|$ is the degree of \footnote{Recall that a map $f:V\to W$ between $\Z$-graded vector spaces is of degree $n$ if $f(V_i)\subset W_{i+n}$ for all $i\in\Z$, where $V_i$ is the degree $i$ homogenous subspace.}  the map $c_j$.\\\\
Let $V^{\star}$ be the dual of $V$ with some choice of $\Z$-grading. 
We now define twisted $L_\infty$-algebras which are slight generalisation of classical $L_\infty$-algebras.
\begin{defn}\label{def:g_twisted_L_infty}
Let $V$ be a $\Z$-graded vector space with dual $V^\star$ and let $g:V^\star\longrightarrow V$ be a linear map. A $g$-twisted $L_{\infty}$-algebra is a tuple $(V,\{\ell_n\}_{n=1}^\infty,g)$ where $\ell_1:V\oplus V^\star\longrightarrow V\oplus V^\star$, we will denote the component maps by the same symbol $\ell_1:V\longrightarrow V,~\ell_1:V^\star\longrightarrow V^\star$,  and  
$\{\ell_n\}_{n=2}^\infty$
is a family of multilinear maps called the $n$-ary bracket: 
\begin{equation}
\ell_n(\cdots):=[-,-, \dots,-]_n:V^{\wedge n}\longrightarrow V^{\star}~.
\end{equation}
The following conditions hold:
\begin{enumerate}
\item  $\ell_1$ has degree $-1$ and is nilpotent: $\ell_1^2=0$.
\item The $n$-ary brackets $\ell_n$ are of degree $n-2$.
    \item 
\textit{Graded-symmetry}: each $\ell_n$ is graded symmetric: for homogeneous vectors $v_i$ and a permutation $\sigma$
\begin{equation}\label{eq:grantsymln}
\ell_n\left(v_{\sigma(1)}, v_{\sigma(2)}, \cdots, v_{\sigma(n)}\right)=\chi\left(\sigma, v_1, \cdots, v_n\right) \cdot \ell_n\left(v_1, v_2, \cdots v_n\right).
\end{equation}
\item \textit{Strong homotopy Jacobi identity}: for all $k \in \mathbb{N}$, 
\begin{equation}
\sum_{\substack{i,j\in\N\\i+j=k+1}}\ell_j^g \ell^g_i=0,    
\end{equation}
where $\ell_i^g:V^{\wedge i}\longrightarrow V$ is given by
\begin{equation}
    \ell^g_1:=\ell_1,~\ell^g_i:=g\circ\ell_i,
\end{equation}
and the product is defined in \eqref{eq:prodcidi}.
\end{enumerate}
\end{defn}
\noindent The first few strong homotopy Jacobi identities are 
\begin{equation}\label{eq:stjacidexp}
\begin{split}
k=1:~~&\ell_1\left(\ell_1\left(v_1\right)\right)=0, \\
k=2:~~&\ell_1\left(\ell_2^g\left(v_1, v_2\right)\right)+\ell_2^g\left(\ell_1\left(v_1\right), v_2\right)+(-1)^{|v_1| |v_2|} \ell_2^g\left(\ell_1\left(v_2\right), v_1\right)=0, \\
k=3:~~&\ell_1\left(\ell_3^g\left(v_1, v_2, v_3\right)\right)+\ell_2^g\left(\ell_2^g\left(v_1, v_2\right), v_3\right)\\&+ 
(-1)^{|v_1|\left(|v_2|+|v_3|\right)} \ell_2^g\left(\ell_2^g\left(v_2, v_3\right), v_1\right)\\&+ (-1)^{|v_3|\left(|v_1|+|v_2|\right)} \ell_2^g\left(\ell_2^g\left(v_3, v_1\right), v_2\right)\\&+\ell_3^g\left(\ell_1\left(v_1\right), v_2, v_3\right)+(-1)^{|v_1|} \ell_3^g\left(v_1, \ell_1\left(v_2\right), v_3\right)\\
&+ (-1)^{|v_1|+|v_2|} \ell_3^g\left(v_1, v_2, \ell_1\left(v_3\right)\right)=0.
\end{split}
\end{equation}
It is easy to see that if $(V,\{\ell_n\},g)$ is a $g$-twisted $L_\infty$-algebra then $(V,\{\ell^g_n\})$ is a usual $L_\infty$-algebra. One can generalise the notion of $g$-twisted $L_\infty$-algebra to the case when $V$ is a supervector space, i.e. a $\mathbb{Z}_2$-graded vector space. Let us denote the $\mathbb{Z}_2$-degree of a vector $v$ by $d_{\mathbb{Z}_2}(v)\in\{\bar{0},\bar{1}\}=\Z_2$.
\begin{defn}
A $g$-twisted $L_\infty$-superalgebra is a triple $(V,\{\ell_n\},g)$, where $V$ is a supervector space and $\ell_n,g$ are as above, satisfying the axioms of $g$-twisted $L_\infty$-algebra with the sign $\chi(\sigma,v_1,\dots,v_n)\in\{+1,-1\}$ obtained by replacing the degree $|v|$ of a vector $v$ by\footnote{Note that in the sum in \eqref{eq:moddegsupvec}, $d_{\mathbb{Z}_2}(v)$ is considered as the integer $0,1$ rather than an equivalence class.} 
\begin{equation}\label{eq:moddegsupvec}
    d(v):=|v|+(d_{\mathbb{Z}_2}(v)+1).
\end{equation}
The degree of $\ell_n$ is $(n-2)$ with respect to the $\Z$-grading $|\cdot|$ on $V$ and is odd degree with respect to $d(\cdot)$.
\end{defn}
\begin{remark}\label{rem:lilj=1/2[]}
Suppose each $\ell_n^g$ has odd degree $d(\ell_n^g)$ as a map from $V\to V$, then the strong homotopy Jacobi identity can also be written 
as
\begin{equation}\label{eq:lilj=1/2[]}
\sum_{\substack{i,j\in\N\\i+j=k+1}}\ell_j^g \ell^g_i=\frac{1}{2}\sum_{\substack{i,j\in\N\\i+j=k+1}}[\ell_j^g ,\ell^g_i]=0.    
\end{equation}
\end{remark}
From now on, we assume that $V$ is a supervector space. To reduce to the usual case, we can take the odd subspace of $V$ to be trivial.
There is a natural pairing 
\begin{equation}
    \begin{split}
        &V^\star\otimes V\longrightarrow \C\\&(\phi,v)\mapsto \phi(v).
    \end{split}
\end{equation}
Suppose now that $V$ is a reflexive vector space\footnote{Recall that a vector space is called reflexive if the natural map $V\to V^{\star\star}$ given by $x\mapsto[\widehat{x}:f\mapsto \widehat{x}(f):=f(x)]$ is a linear isomorphism. Thus finite dimensional vector spaces are reflexive.\label{foot:refl}}. Define a bilinear pairing $\omega:V^\star\otimes V \longrightarrow \C$ by
\begin{equation}\label{eq:omdef}
    \omega(v_1,v_2):=(-1)^{d(v_1)}v_1(v_2)~,
\end{equation}
on homogeneous elements $v_1\in V^\star$ and any $v_2\in V$ and then extend it linearly to all of $V^\star$.
%If $V$ is a supervector space, the degree $|v_1|$ must be replaced by $d(v)$ as in \eqref{eq:moddegsupvec}.
The bilinear pairing $\omega:V^\star\otimes V \longrightarrow \C$ defined above is said to be graded anti-symmetric if  
\begin{equation}
\omega\left(v_1, v_2\right)=-(-1)^{d(v_1)d(v_2)} \omega\left(v_2, v_1\right),
\end{equation}
where the right hand side is the bilinear pairing on $V^{\star\star}\otimes V^\star$ with the natural identification of $V^{\star\star}$ with $V$, see footnote \ref{foot:refl}. %It is easy to see that if $\omega\left(v_1, v_2\right)\neq 0$ then $d(v_1)+d(v_2)$ is an odd integer.Note that using the twist $g$ we can define a pairing of $V^\star\otimes V^\star$ which we also denote by $\omega$:\begin{equation}\omega(v_1,v_2):=\omega(v_1,g(v_2))=(-1)^{d(v_1)}v_1(g(v_2)).\end{equation}Graded antisymmetry of $\omega$ is defined in the same way. 
The $n$-ary brackets $\ell_n,~n\geq 2$ are called \textit{cyclic} with respect to $\omega$ if they satisfy 
\begin{equation}
\omega\left(\ell_n\left(v_1, \dots, v_n\right), v_{n+1}\right) =-(-1)^{d(v_1)}\omega\left(v_1, \ell_n\left(v_2, \ldots, v_{n+1}\right)\right) .    
\end{equation}
We say that $(V,\{\ell_n\}_{n=1}^\infty,g)$ is a cyclic $g$-twisted $L_\infty$-(super)algebra with respect to $\omega$ if every $\ell_n,~n\geq 2$ is cyclic with respect to $\omega$. 
\subsection{Closed superstring field theory and $L_\infty$-superalgebra}\label{sec:2.4}
We now show that the superstring theory Hilbert space $\widehat{\s{H}}_T$ along with the $n$-string vertices is a cyclic $L_\infty$-superalgebra with respect to (modified) BPZ inner product. Let us first define the grading on the full Hilbert space $\s{H}_T$ by declaring 
\begin{comment}
\begin{equation}
    |\Psi|=\begin{cases}
        \text{gh}(\Psi)+1,&\Psi\in\s{H}_{\mathrm{NS}},\s{H}_{\mathrm{NSNS}},\mathcal{H}_{\mathrm{RR}}\\\text{gh}(\Psi),&\Psi\in\s{H}_{\mathrm{R}},\s{H}_{\mathrm{RNS}},\mathcal{H}_{\mathrm{NSR}},
    \end{cases}
\end{equation}

\begin{equation}
    |\Psi|=\begin{cases}
        \text{gh}(\Psi)-3,&\Psi\in\widehat{\s{H}}_T,\\\text{gh}(\Psi)-5,&\Psi\in\widetilde{\s{H}}_T,
    \end{cases}
\end{equation}
\end{comment}
\begin{equation}
    |\Psi|=1-\text{gh}(\Psi)~,
\end{equation}
where $\text{gh}(\Psi)$ is the ghost number of $\Psi$. %so that $(-1)^{|\Psi|}=-(-1)^{\gamma}$ for NS, NSNS, RR states and $(-1)^{|\Psi|}=(-1)^{\gamma}$ for R, NSR and RNS states where $(-1)^{\gamma}$ is the grassmannality of $\Psi$. 
Next the supervector space structure of $\widehat{\s{H}}_T$ is defined by putting a $\mathbb{Z}_2$ grading on the full Hilbert space $\s{H}_T$ as follows:
\begin{equation}
    d_{\mathbb{Z}_2}(\Psi)=\begin{cases}
        0,&\Psi\in\s{H}_{\mathrm{NSNS}}, \s{H}_{\mathrm{RR}}\quad \text{for Type II},\\0,&\Psi\in \s{H}_{\mathrm{NS}}\quad \text{for Heterotic},\\1, &\Psi\in\s{H}_{\mathrm{NSR}}, \s{H}_{\mathrm{RNS}}\quad  \text{for Type II},\\ 1, &\Psi\in\s{H}_{\mathrm{R}}\quad \text{for Heterotic}.
    \end{cases}
\end{equation}
With this definition, it is clear that the grassmannality of states in $\widehat{\s{H}}_T$ and $\widetilde{\s{H}}_T$ and the grading is related by 
\begin{equation}\label{eq:dpsidefdz2}
    d(\Psi):=|\Psi|+d_{\mathbb{Z}_2}(\Psi)+1=d_{\mathbb{Z}_2}(\Psi)-\text{gh}(\Psi)+2\equiv\gamma \bmod 2~,
\end{equation}
where $\gamma$ is the grassmanality of $\Psi$. 
Next, the relations \eqref{eq:inprodphirc} identifies $\widetilde{\s{H}}_T$ with a subspace of the dual space\footnote{Note that for finite dimensional vector spaces, this identification is an isomorphism. But since our spaces are necessarily infinite dimensional this defines only a subspace.} $\widehat{\s{H}}_T^\star$ via the map 
\begin{equation}\label{eq:HtilidHhat*}
    \widetilde{\s{H}}_T\ni\widetilde{\Psi}\mapsto \left\langle \widetilde{\Psi}|c_0^-|\cdot\right\rangle,
\end{equation}
where $\left\langle \widetilde{\Psi}|c_0^-|\cdot\right\rangle$ is a linear functional on $\widehat{\s{H}}_T$ acting via the BPZ inner product: 
\begin{equation}
\widehat{\s{H}}_T\ni\Psi\mapsto \left\langle \widetilde{\Psi}|c_0^-|\Psi\right\rangle\in\C.    
\end{equation}
Moreover since Hilbert spaces are reflexive i.e. $\widehat{\s{H}}_T\cong \widehat{\s{H}}_T^{\star\star}$ under the map (see footnote \ref{foot:refl}) 
\begin{equation}\label{eq:Hwh**iso}
\begin{split}
    &\Psi\mapsto \langle\cdot|c_0^-|\Psi\rangle\in \widehat{\s{H}}_T^{\star\star}~,
\end{split}
\end{equation}
which acts as 
\begin{equation}
\wt{\Psi}\mapsto \left\langle\wt{\Psi}|c_0^-|\Psi\right\rangle~,    
\end{equation}
the second relation in \eqref{eq:inprodphirc} can be understood as the restriction of the pairing  $\widehat{\s{H}}_T^{\star\star}\otimes \widehat{\s{H}}_T^{\star}\to\mathbb{C}$.
\par 
Now
define the multilinear maps $\ell_1:=Q_B$ and (see \eqref{eq:[]stdef})
\begin{equation}
\begin{split}
    &\ell_n:\widehat{\s{H}}_T^{\wedge n}\longrightarrow\widetilde{\s{H}}_T~,\\&\Psi_1\wedge\dots\wedge\Psi_n\mapsto [\Psi_1\Psi_2\dots\Psi_n],\quad n>1.
\end{split}
\end{equation}
Since $Q_B$ has ghost number $-1$, we have
\begin{equation}
    |Q_B\Psi|=1-\text{gh}(Q_B\Psi)=1-(\text{gh}(\Psi)+1)=(1-\text{gh}(\Psi))-1=|\Psi|-1.
\end{equation}
Thus $\ell_1$ is a degree $-1$ map.
Next, from \eqref{eq:ghostn[]} we see that 
\begin{equation}\label{eq:|ln|calc}
\begin{split}
 |\ell_n(\Psi_1,\dots,\Psi_n)|&=1-\text{gh}(\ell_n(\Psi_1,\dots,\Psi_n))\\&=1-3-\sum_{i=1}^n(\text{gh}(\Psi_i)-2)\\&=(n-2)+\sum_{i=1}^n(1-\text{gh}(\Psi_i))\\&=(n-2)+\sum_{i=1}^n|\Psi_i|,   
\end{split}
\end{equation}
which means that $\ell_n$ is a degree $n-2$ map as required. We also define the $\Z_2$-parity of the string product to be\footnote{This is consistent with the fact that the product of two NSR or RNS states in Type II is an NSNS or RR state. While the product of an NSNS or RR state with an NSR or RNS state is NSR or RNS. Similarly the product of two R states is an NS state in Heterotic while the product of an NS state with an R state is an R state.} 
\begin{equation}\label{eq:dz2lndef}
    d_{\Z_2}(\ell_n(\Psi_1,\dots,\Psi_n)):=\sum_{i=1}^nd_{\Z_2}(\Psi_i)\in\mathbb{Z}_2~,
\end{equation}
so that 
\begin{equation}
    d(\ell_n(\Psi_1,\dots,\Psi_n))=|\ell_n(\Psi_1,\dots,\Psi_n)|+\sum_{i=1}^nd_{\Z_2}(\Psi_i)+1.
\end{equation}
This implies that 
\begin{equation}\label{eq:dlncalc}
  d(\ell_n(\Psi_1,\dots,\Psi_n))=\sum_{i=1}^nd(\Psi_i)-1~,
\end{equation}
so that the degree of $\ell_n$ with respect to $d(\cdot)$ is $-1$ and $\ell_n$ is an odd degree operator.\\\\
From \eqref{eq:dpsidefdz2} and \eqref{eq:|ln|calc}, we see that the grassmanality of $\ell_n(\Psi_1,\dots,\Psi_n)$ is opposite of the sum of the grassmanalities of $\Psi_i$:
\begin{equation}
    \gamma_{\ell_n}\equiv1+\sum_{i=1}^n\gamma_i\bmod 2
\end{equation}
where $\gamma_{\ell_n}$ is the grassmanality of $\ell_n(\Psi_1,\dots,\Psi_n)$.
From \eqref{eq:commin{}},
we see that $\ell_n$ is graded-symmetric multilinear map. Indeed, decomposing any permutation $\sigma$ into transpositions and then using \eqref{eq:commin{}} gives the required relation. Finally, an application of the PCO $\mathcal{G}$ on the main identity \eqref{eq:mainid[]} gives
\begin{equation}
\begin{aligned}
& Q_B\mathcal{G}\left[\Psi_1 \cdots \Psi_n\right]+\sum_{i=1}^n(-1)^{\gamma_1+\dots +\gamma_{i-1}}\mathcal{G}\left[\Psi_1 \cdots \Psi_{i-1}\left(Q_B \Psi_i\right) \Psi_{i+1} \cdots \Psi_n\right] \\
& +\sum_{\substack{\ell, k \geq 2 \\
\ell+k=n}} \sum_{\substack{\left\{i_a ; a=1, \dots, \ell\right\},\left\{j_b ; b=1, \dots, k\right\} \\
\left\{i_a\right\} \cup\left\{j_b\right\}=\{1, \dots, n\}}} \sigma\left(\left\{i_a\right\},\left\{j_b\right\}\right)\mathcal{G}\left[\mathcal{G}\left[\Psi_{j_1} \cdots \Psi_{j_k}\right]\Psi_{i_1} \cdots \Psi_{i_{\ell}} \right]=0,
\end{aligned}
\end{equation}
where we used the fact that $[Q_B,\s{G}]=0$. This  
is exactly the same as the strong homotopy Jacobi identity once we realise that the sum over partition of $\{1,\dots,n\}$ in \eqref{eq:mainid[]} is the same as the sum over unshuffles. The first two terms in the above identity correspond to the terms $(i,j)=(1,n), (n,1)$.  
\begin{comment}
\begin{equation}
\begin{aligned}
& Q_B\s{G}\left[\Psi_1 \cdots \Psi_n\right]+\sum_{i=1}^n(-1)^{\gamma_1+\dots +\gamma_{i-1}}\s{G}\left[\Psi_1 \cdots \Psi_{i-1}\left(Q_B \Psi_i\right) \Psi_{i+1} \cdots \Psi_n\right] \\
& +\sum_{\substack{\ell, k \geq 0 \\
\ell+k=n}} \sum_{\substack{\left\{i_i ; a=1, \dots, \ell\right\},\left\{j_b ; b=1, \dots, k\right\} \\
\left\{i_a\right\} \cup\left\{j_b\right\}=\{1, \dots, n\}}}^n \sigma\left(\left\{i_a\right\},\left\{j_b\right\}\right)\s{G}\left[\mathcal{G}\left[\Psi_{j_1} \cdots \Psi_{j_k}\right]\Psi_{i_1} \cdots \Psi_{i_{\ell}} \right]=0.
\end{aligned}    
\end{equation}
Now the inner sum ranges over all partitions of $\{1,\dots,n\}$, \textbf{but we can rearrange each term ....}
Thus we see that the $\ell_n$ satisfies the strong homotopy Jacobi identity. 
\end{comment}
The bilinear pairing between $\widetilde{\s{H}}_T$ and $\widehat{\s{H}}_T$ is given by 
\begin{equation}
    \omega\left(\widetilde{\Psi},\Psi\right)=(-1)^{d(\widetilde{\Psi})}\left\langle\widetilde{\Psi}\big{|}c_0^-\big{|}\Psi\right\rangle. 
\end{equation}
We also have 
\begin{equation}
\omega\left(\Psi,\widetilde{\Psi}\right)=(-1)^{d(\Psi)}\left\langle\Psi\big{|}c_0^-\big{|}\widetilde{\Psi}\right\rangle.     
\end{equation}
The BPZ inner product is symmetric up to a sign dependent on grassmanalities of the states:
\begin{equation}
    \langle\widetilde{\Psi}|c_0^-|\Psi'\rangle=-(-1)^{\wt{\gamma}+\gamma'+\wt{\gamma}\gamma'}\langle\Psi'|c_0^-|\widetilde{\Psi}\rangle~,%\quad \langle\widetilde{\Psi}|c_0^-\s{G}|\wt{\Psi}'\rangle=-(-1)^{\wt{\gamma}\wt{\gamma}'}\langle\wt{\Psi}'|\s{G}c_0^-|\widetilde{\Psi}\rangle~.
\end{equation}
where $\wt{\gamma}$ and $\gamma'$ are the grassmannality of $\wt{\Psi}\in\wt{\s{H}}_T$ and $\Psi'\in \wh{\s{H}}_T$ respectively. The factor comes from commuting the vertex operators of the states past $c_0^-$ and past each other and the additional negative sign comes from the SL(2,$\mathbb{C}$)-action $I:z\to -1/z$ on the integral for the zero mode of $c(z),\overline{c}(\bar{z})$.  
We have from \eqref{eq:dpsidefdz2}
\begin{equation}
\omega\left(\widetilde{\Psi},\Psi\right)=-(-1)^{2d(\widetilde{\Psi})+d(\Psi)} (-1)^{d(\widetilde{\Psi})d(\Psi)}\omega\left(\Psi,\widetilde{\Psi}\right).   
\end{equation}
\begin{comment}
Now since $\omega\left(\widetilde{\Psi},\Psi\right)$ vanishes unless $d(\widetilde{\Psi})+d(\Psi)$ is odd, we see that 
\begin{equation}
(-1)^{d(\widetilde{\Psi})-d(\Psi)}=-1~,   
\end{equation}
\end{comment}
This implies that $\omega$ is graded anti-symmetric:
\begin{equation}\label{eq:omgradantcomm}
\omega\left(\widetilde{\Psi},\Psi\right)=-(-1)^{d(\widetilde{\Psi})d(\Psi)}\omega\left(\Psi,\widetilde{\Psi}\right).    
\end{equation}
From \eqref{eq:Ghermbpz} we see that 
\begin{equation}\label{eq:Ghermom}
\omega\left(\widetilde{\Psi},\s{G}\wt{\Psi}'\right)=\omega\left(\s{G}\wt{\Psi},\widetilde{\Psi}'\right)~.    
\end{equation}
We now show that with respect to $\omega$, $\ell_n$ is cyclic:
\begin{equation}\label{eq:cycln}
\omega\left(\ell_n\left(\Psi_1 ,\Psi_2,\dots, \Psi_n\right), \Psi_{n+1}\right)=-(-1)^{d(\Psi_1)} \omega\left(\Psi_1, \ell_n\left(\Psi_2, \dots,  \Psi_{n+1}\right)\right)~,
\end{equation}
for $\Psi_i \in \widehat{\mathcal{H}}_T$. 
\begin{comment}
First observe that from \eqref{eq:commin{}}, it is clear that the grassmanality of $\ell_n\left(\Psi_1 ,\Psi_2,\dots, \Psi_n\right)$ is the sum of the grassmalities of $\Psi_i$. Indeed we have 
\begin{equation}
\begin{split}
    &\left\langle\Psi\left|c_0^{-} \right| [\Psi_1\Psi_2\dots\Psi_n]\right\rangle=\{\Psi\Psi_1\Psi_2\dots\Psi_n\}\\&=(-1)^{(\gamma_1 +\dots \gamma_n) \gamma}\{\Psi_1\Psi_2\dots\Psi_n\Psi\}\\&=
\end{split}    
\end{equation}
\end{comment}
This is proved as follows:
\begin{equation}
\begin{aligned}
\omega\left(\Psi_1, \ell_n \left(\Psi_2,\dots, \Psi_{n+1}\right)\right)&=(-1)^{d(\Psi_1)}\left\langle \Psi_1|c_0^-| \ell_n \left(\Psi_2 \ldots \Psi_{n+1}\right)\right\rangle\\&=(-1)^{d(\Psi_1)}\left\{\Psi_1 \Psi_2 \ldots \Psi_{n+1}\right\} \\
& =(-1)^{d(\Psi_1)}(-1)^{\gamma_1 \gamma_{n+1}+\ldots+\gamma_n \gamma_{n+1}}\left\{\Psi_{n+1} \Psi_1\Psi_2 \Psi_2 \ldots \Psi_n\right\} \\
& =(-1)^{d(\Psi_1)}(-1)^{(\gamma_1 +\dots+ \gamma_n) \gamma_{n+1}}(-1)^{d(\Psi_{n+1})} \\&\times\omega\left(\Psi_{n+1}, \ell_n \left(\Psi_1, \dots, \Psi_n\right)\right), 
\end{aligned}
\end{equation}
where we used \eqref{eq:commin{}}. Here $(-1)^{\gamma_i}$ is the grassmannality of $\Psi_i$. Now using \eqref{eq:dpsidefdz2} and \eqref{eq:dlncalc} it is easy to see that
\begin{equation}
    \gamma_1+\dots+\gamma_n\equiv (d(\ell_n(\Psi_1,\dots,\Psi_n))-1) \bmod 2
\end{equation}
This gives using \eqref{eq:dpsidefdz2} 
\begin{equation}
\begin{split}
    \omega\left(\Psi_1, \ell_n \left(\Psi_2,\dots, \Psi_{n+1}\right)\right)&= (-1)^{d(\Psi_1)} (-1)^{(d(\ell_n(\Psi_1,\dots,\Psi_n)-1)d(\Psi_{n+1})}(-1)^{d(\Psi_{n+1})}\\&\times\omega\left( \Psi_{n+1},\ell_n \left(\Psi_1, \dots,\Psi_n\right)\right)\\&=-(-1)^{d(\Psi_1)}\omega\left(\ell_n \left(\Psi_1, \dots,\Psi_n\right),\Psi_{n+1}\right)
\end{split}    
\end{equation}
where we used \eqref{eq:omgradantcomm}.
\par The superstring action can then be written using $\omega$ as 
\begin{equation}\label{eq:clactellop}
    S_{\text{cl}}=\frac{1}{g_s^2}\left[-\frac{1}{2}\omega(\widetilde{\Psi},Q_B\s{G}\widetilde{\Psi})+\omega(\widetilde{\Psi},Q_B\Psi)+\sum_{n=1}^\infty\frac{1}{(n+1)!}\omega\left(\Psi,\ell_n(\Psi^n)\right)\right],
\end{equation}
where $\widetilde{\Psi}\in\widetilde{\s{H}}_T,\Psi\in\widehat{\s{H}}_T$ are grassmann even string fields and 
\begin{equation}
 \ell_n(\Psi^n):=\ell_n(\underbrace{\Psi,\Psi,\dots,\Psi}_{n~\text{times}}).   
\end{equation}
Note that the first term in the sum is zero since the BPZ inner product vanishes for two states in $\wh{\s{H}}_T$. 
Choosing a smooth interpolation $\Psi(t)$ for $0\leq t\leq 1$ with $\Psi(0)=0$ and $\Psi(1)=\Psi$, we can write the action as 
\begin{equation}\label{eq:stactionint}
S_{\text{cl}}=\frac{1}{g_s^2}\left[-\frac{1}{2}\omega(\widetilde{\Psi},Q_B\s{G}\widetilde{\Psi})+\omega(\widetilde{\Psi},Q_B\Psi)+\sum_{n=1}^\infty\frac{1}{n!} \int_0^1 dt~\omega(\dot{\Psi}(t),\ell_n(\Psi(t)^n))\right]~.   
\end{equation}
%Thus the classical closed superstring field theory can be algebraically described by a $\s{G}$-twisted $L_\infty$-superalgebra.\par
The EOM takes the form
\begin{equation}
\begin{aligned}
& \operatorname{EOM}(\widetilde{\Psi})=Q_B(\Psi-\s{G} \widetilde{\Psi}), \\
& \operatorname{EOM}(\Psi)=Q_B \widetilde{\Psi}+\sum_{n=2}^{\infty} \frac{1}{n !} \ell_n \left(\Psi^{ n}\right)=Q_B \widetilde{\Psi}+\mathcal{J}^{\prime}(\Psi),
\end{aligned}
\end{equation}
where we have introduced the notation
\begin{equation}
\mathcal{J}^{\prime}(\Psi)=\sum_{n=2}^{\infty} \frac{1}{n !} \ell_n \left(\Psi^{n}\right) .
\end{equation}
Using these two equations, we obtain
\begin{equation}\label{eq:EOMPsi}
\operatorname{EOM}(\Psi)=Q_B \Psi+\mathcal{J}(\Psi)~,
\end{equation}
where 
\begin{equation}
\mathcal{J}(\Psi):=\s{G} \s{J}^{\prime}(\Psi)=\sum_{n=2}^{\infty} \frac{1}{n !} \ell_n^{\s{G}} \left(\Psi^{n}\right),    
\end{equation}
where $\ell_n^{\s{G}}=\s{G}\circ\ell_n$.
The linearized gauge transformation is given by
\begin{equation}
\begin{aligned}
& \delta_{\Lambda} \Psi=Q_B \Lambda+\sum_{n=2}^{\infty} \frac{1}{n !} \ell_{n+1}^{\s{G}} \left(\Lambda, \Psi^{n}\right)~, \\
& \delta_{\Lambda} \widetilde{\Psi}=Q_B \widetilde{\Lambda}+\sum_{n=2}^{\infty} \frac{1}{n !} \ell_{n+1} \left(\Lambda,\Psi^{n}\right)~,
\end{aligned}
\end{equation}
where $|\Lambda\rangle \in \widehat{\mathcal{H}}_T$ and $|\widetilde{\Lambda}\rangle \in \widetilde{\mathcal{H}}_T$ both grassmann odd.
\section{Wilsonian effective action in superstring field theory}\label{sec:3}
In this section, we review the procedure of integrating out heavy fields to obtain the Wilsonian effective action as described in \cite{Sen:2016qap}. We then show that the effective action can be written equivalently in terms of certain effective string products. 
\subsection{The string field theory description}
Let us suppose we have a projection operator $P$ on a subset of string fields satisfying the conditions
\begin{equation}\label{eq:Pprop}
\left[P, b_0^{ \pm}\right]=0,\quad \left[P, c_0^{ \pm}\right]=0,\quad \left[P, L_0^{ \pm}\right]=0,\quad [P, \s{G}]=0,\quad \left[P, Q_B\right]=0 \text {. }
\end{equation}
We also assume that the projection operator is grassmann even and self-conjugate with respect to the BPZ inner product:
\begin{equation}\label{eq:Phermdef}
\left\langle\widetilde{\Psi}\left|c_0^{-} \right| P\Psi\right\rangle= \left\langle P\widetilde{\Psi}\left|c_0^{-} \right| \Psi\right\rangle,\quad \widetilde{\Psi}\in\widetilde{\mathcal{H}}_T,~~\Psi\in \widehat{\mathcal{H}}_T.    
\end{equation}
The usual way to get the Wilsonian effective action is by integrating out the states in $\bar{P}\left(\widetilde{\s{H}}_T \oplus \widehat{\s{H}}_T\right)$, where $\bar{P}=1-P$. To write down the effective action, we construct new string products $\{\ldots\}_e$ by using the propagator
\begin{equation}
\left\langle\varphi_s^c\left|c_0^{-} b_0^{+}\left(L_0^{+}\right)^{-1} \s{G} \bar{P}\right| \varphi_r^c\right\rangle \text {. }
\end{equation}
The effective action is then given by
\begin{equation}
S_e=\frac{1}{g_s^2}\left[-\frac{1}{2}\left\langle\widetilde{\psi}\left|c_0^{-} Q_B \s{G}\right| \widetilde{\psi}\right\rangle+\left\langle\widetilde{\psi}\left|c_0^{-} Q_B\right| \psi\right\rangle+\sum_{n=3}^{\infty} \frac{1}{n !}\left\{\psi^n\right\}_e\right]~,
\end{equation}
where $\psi \in P \widehat{\mathcal{H}}_T, \quad \widetilde{\psi} \in P \widetilde{\s{H}}_T$. \par We will show that this effective action can be obtained using homotopy transfer of the $\s{G}$-twisted cyclic $L_{\infty}$-superalgebra $\left(\wh{\s{H}}_T,\{\ell_n\},\s{G}\right)$. 
\subsection{Effective equation of motions}
Let us split the state $\widetilde{\Psi}, \Psi$ as
\begin{equation}\label{eq:decomppsiR}
\widetilde{\Psi}=\widetilde{\psi}+\widetilde{R}, \quad \Psi=\psi+R~,
\end{equation}
where $\widetilde{\psi}=P \widetilde{\Psi}, \psi=P \Psi, \widetilde{R}=\bar{P} \Psi, R=\bar{P} \Psi$. Next note that \eqref{eq:Phermdef} implies that 
\begin{equation}\label{eq:Pherm}
\omega\left(P \Psi, \Psi^{\prime}\right)=\omega\left(\Psi, P\Psi^{\prime}\right)~,
\end{equation}
since $P$ is grassmann even and commutes with $\s{G}$. Next define the degree-odd \footnote{It is degree-odd since $b_0^+$ is grassmann odd.} operator 
\begin{equation}
h=\frac{b_0^{+}}{L_0^{+}} \bar{P}.
\end{equation}
From the property \eqref{eq:Pprop} we have 
\begin{equation}\label{eq:hGcomm}
    [h,\s{G}]=0~, \quad \{h,c_0^-\}=0~,
\end{equation}
thus $h:\widetilde{\s{H}}_T\longrightarrow \widetilde{\s{H}}_T$ and also $h:\widehat{\s{H}}_T\longrightarrow \widehat{\s{H}}_T$.
Then we have the \textit{Hodge-Kodaira decomposition}
\begin{equation}\label{eq:HKdecomphQB}
\left\{h, Q_B\right\}=\bar{P} \text {. }
\end{equation}
Indeed using 
\begin{equation}
    \{AB,C\}=A[B,C]+\{A,C\}B~,
\end{equation}
we get
\begin{equation}
\begin{aligned}
\left\{\frac{b_0^{+}}{L_0^{+}} \bar{P}, Q_B\right\} & =\frac{\left\{b_0^{+}, Q_B\right\}}{L_0^{+}} \bar{P} \\
& =\frac{L_0^{+}}{L_0^{+}} \bar{P}=\bar{P}~.
\end{aligned}
\end{equation}
Also
\begin{equation}
\begin{aligned}
h^2 & =\left(b_0^{+}\left(L_0^{+}\right)^{-1} \bar{P}\right)\left(b_0^{+}\left(L_0^{+}\right)^{-1} \bar{P}\right) \\
& =b_0^{+}\left(L_0^{+}\right)^{-1} b_0^{+}\left(L_0^{+}\right)^{-1} \bar{P}^2 \\
& =\left(b_0^{+}\right)^2\left(L_0^{+}\right)^{-2} \bar{P}^2 \\
& =0 .
\end{aligned}
\end{equation}
We also have
\begin{equation}
\begin{aligned}
P h & =b_0^{+}\left(L_0^{+}\right)^{-1} P \bar{P}\\
& =0=h P .
\end{aligned}
\end{equation}
Thus we have the \textit{annihilation conditions}
\begin{equation}\label{eq:anncondhp}
    h^2=Ph=hP=0~.
\end{equation}
Substituting the decomposition \eqref{eq:decomppsiR} in the action \eqref{eq:clactellop} and varying separately with respect to $\widetilde{\psi}, \widetilde{R}, \psi, R$, we obtain
\begin{equation}
\begin{aligned}
& \operatorname{EOM}_{\widetilde{\psi}}(\widetilde{\psi}, \widetilde{R}, \psi)=P\operatorname{EOM}(\widetilde{\psi}+\widetilde{R})=Q_B(\psi-\s{G} \widetilde{\psi})~, \\
& \operatorname{EOM}_{\widetilde{R}}(\widetilde{\psi}, \widetilde{R}, R)=\bar{P} \operatorname{EOM}(\widetilde{\psi}+\widetilde{R})=Q_B(R-\s{G} \widetilde{R})~, \\
& \operatorname{EOM}_\psi(\psi, R, \widetilde{\psi})=P\operatorname{EOM}(\psi+R)=Q_B \widetilde{\psi}+P \s{J}^{\prime}(\psi+R)~, \\
& \operatorname{EOM}_R(\psi, R, \widetilde{R})=\bar{P} \operatorname{EOM}(\psi+R)=Q_B \widetilde{R}+\bar{P} \s{J}^{\prime}(\psi+R) .
\end{aligned}
\end{equation}
Again combining these EOMs we obtain the following two equations: 
\begin{equation}\label{eq:eompsiR}
\begin{aligned}
& \operatorname{EOM}_\psi(\psi, R)=P\operatorname{EOM}(\psi+R)=Q_B \psi+P \s{J}(\psi+R)~, \\
& \operatorname{EOM}_R(\psi, R)=\bar{P} \operatorname{POM}(\psi+R)=Q_B R+\bar{P} \s{J}(\psi+R) .
\end{aligned}
\end{equation}
\subsection{Fixing the Siegel gauge}
The Siegel gauge condition is
\begin{equation}
b_0^{+}|\widetilde{\Psi}\rangle=0,\quad b_0^{+}|\Psi\rangle=0 \implies b_0^{+}\left(|\Psi\rangle-\frac{1}{2} \s{G}|\widetilde{\Psi}\rangle\right)=0 \text {. }
\end{equation}
To integrate out $R$ and $\widetilde{R}$ from the dynamics, we need to use the EOM and solve for $R,\widetilde{R}$ and plug it back in the action. But before that we need to fix gauge for $\widetilde{R}, R$.
Define the Siegel gauge projector $P_s=b_0^{+} c_0^{+}$. Then 
\begin{equation}
\bar{P}_s \equiv 1-P_s=c_0^{+} b_0^{+},
\end{equation}
since 
\begin{equation}
    \left\{b_0^{+}, c_0^{+}\right\}=\left\{b_0^{-}, c_0^{-}\right\}=1. 
\end{equation}
Then decompose $\widetilde{R}=\widetilde{R}_1+\widetilde{R}_2, R=R_1+R_2$ where
\begin{equation}
\begin{aligned}
& \widetilde{R}_1=P_s \widetilde{R}, \quad \widetilde{R}_2=\bar{P}_s \widetilde{R}~, \\
& R_1=P_s R, \quad R_2=\bar{P}_s R ~.
\end{aligned}
\end{equation}
The Siegel gauge condition on $\widetilde{R}, R$ is then
\begin{equation}
\widetilde{R}_2=R_2=0.    
\end{equation} The EOM then decomposes into a set of equations in $\widetilde{R}_1, \widetilde{R}_2$ and $R_1, R_2$.
Next we have
\begin{equation}\label{eq:commPsP}
[P, P_s]=\left[P, b_0^{+} c_0^{+}\right]=0~.
\end{equation}
Thus using \eqref{eq:commPsP} and the EOM${}_R$ and EOM${}_{\widetilde{R}}$ we obtain 
\begin{equation}\label{eq:4eomsR1R1tilde}
\begin{aligned}
& \operatorname{EOM}_{\widetilde{R}_1}\left(\widetilde{\psi}, \widetilde{R}_1, R_1\right)=\bar{P}_s Q_B\left(R_1-\s{G} \widetilde{R}_1\right)~, \\
& \operatorname{EOM}_{R_1}\left(\psi, R_1, \widetilde{R}_1\right)=\bar{P}_s Q_B \widetilde{R}_1+\bar{P}_s \bar{P} \s{J}^{\prime}\left(\psi+R_1\right) ~, \\
& \operatorname{EOM}_{\widetilde{R}_2}\left(\widetilde{\psi}, \widetilde{R}_1, R_1\right)=P_s Q_B\left(R_1-\s{G} \widetilde{R}_1\right)~, \\
& \operatorname{EOM}_{R_2}\left(\psi, R_1, \widetilde{R}_1\right)=P_s Q_B \widetilde{R}_1+P_s \bar{P} \s{J}^{\prime}\left(\psi+R_1\right) .
\end{aligned}
\end{equation}
Again combining these equations we obtain
\begin{equation}\label{eq:eomR1_R1R1_R2}
\begin{aligned}
& \operatorname{EOM}_{R_1}\left(\psi, R_1\right)=\bar{P}_s Q_B R_1+\bar{P}_s \bar{P} \s{J}\left(\psi+R_1\right) \\
& \operatorname{EOM}_{R_2}\left(\psi, R_1\right)=P_s Q_B R_1+P_s \bar{P} \s{J}\left(\psi+R_1\right) .
\end{aligned}
\end{equation}
We now want to solve for $\widetilde{R}_1, R_1$ using the first equation above. Given a solution $R_1$ to the first equation of \eqref{eq:eomR1_R1R1_R2}, $\widetilde{R}_1$ is determined by the second equation of \eqref{eq:4eomsR1R1tilde} up to a $Q_B$-closed state.
We have
\begin{equation}
\bar{P}_s Q_B R_1+\bar{P}_s \bar{P} \s{J}\left(\psi+R_1\right)=0 .
\end{equation}
Since
\begin{equation}
\left\{b_0^{\pm}, Q_B\right\}=L_0^{ \pm} \text { and }\left(b_0^{+}\right)^2=0~,
\end{equation}
$\big($follows from $\left\{b_n, Q_B\right\}=L_n,\left\{\bar{b}_n, Q_B\right\}=\bar{L}_n$ and $\left\{b_n, \bar{b}_m\right\}=0\big),$ we have
\begin{equation}
\bar{P}_s Q_B R_1=c_0^{+} L_0^{+} R_1~,
\end{equation}
where we used the fact $b_0^{+} R_1=b_0^{+} P_s R=\left(b_0^{+}\right)^2 c_0^{+} R=0$. Thus $R_1$ must satisfy
\begin{equation}
\begin{aligned}
& c_0^{+} L_0^{+} R_1+c_0^{+} b_0^{+} \bar{P} J\left(\psi+R_1\right)=0 \\
\implies & c_0^{+} L_0^{+}\left(R_1+\left(L_0^{+}\right)^{-1} b_0^{+} \bar{P} \s{J}\left(\psi+R_1\right)\right)=0 \\
\implies & R_1(\psi)=-\left.h \s{J}(\Psi)\right|_{\Psi=\psi+R_1}.
\end{aligned}
\end{equation}
One can obtain recursive solution to this as follows: let $G=-h \circ \s{J}$. Then assuming $R_1(0)=0$, we get
\begin{equation}
R_1(\psi)=G(\psi+G(\psi+G(\psi+\ldots))) \text {. }
\end{equation}
More explicitly from $\Psi=\psi+R_1(\psi)$ we have
\begin{equation}\label{eq:Psirecsol}
\begin{aligned}
\Psi(\psi)= & \psi-\frac{1}{2 !} h \ell_2^{\s{G}}(\psi, \psi)-\frac{1}{3 !} h \ell_3^{\s{G}}(\psi, \psi, \psi)+\frac{2}{(2 !)^2} h \ell_2^{\s{G}}\left(h \ell_2^{\s{G}}(\psi, \psi), \psi\right)+ \\
& -\frac{1}{4 !} h \ell_4^{\s{G}}(\psi, \psi, \psi, \psi)+\frac{2}{2 ! 3 !} h \ell_2^{\s{G}}\left(h \ell_3^{\s{G}}(\psi, \psi, \psi), \psi\right)+\frac{3}{2 ! 3 !} h \ell_3^{\s{G}}\left(h \ell_2^{\s{G}}(\psi, \psi), \psi, \psi\right)+ \\
& -\frac{1}{(2 !)^3} h \ell_2^{\s{G}}\left(h \ell_2^{\s{G}}(\psi, \psi), h \ell_2^{\s{G}}(\psi, \psi)\right)-\frac{2^2}{(2 !)^3} h \ell_2^{\s{G}}\left(h \ell_2^{\s{G}}\left(h \ell_2^{\s{G}}(\psi, \psi), \psi\right), \psi\right)+O\left(\psi^5\right) .
\end{aligned}
\end{equation}
One can show that this solution satisfies the second equation of \eqref{eq:eomR1_R1R1_R2} as in \cite{Erbin:2020eyc}. It is easy to see that all the four equations in \eqref{eq:4eomsR1R1tilde} are satisfied  since \eqref{eq:eomR1_R1R1_R2} is equivalent to \eqref{eq:4eomsR1R1tilde}. Substituting the recursive solution \eqref{eq:Psirecsol} in the first equation of \eqref{eq:eompsiR} we obtain the eom for $\psi$:
\begin{equation}\label{eq:eompsi}
\operatorname{eom}(\psi)=\sum_{n=1}^{\infty} \frac{1}{n !} \widetilde{\ell}^{\s{G}}_n\left(\psi^{n}\right)
\end{equation}
where $\widetilde{\ell}^{\s{G}}_1=\widetilde{\ell}_1$ and  $\widetilde{\ell}^{\s{G}}_n=\s{G}\circ\widetilde{\ell}_n$ for $n>1$ where 
\begin{equation}\label{eq:lnGtildedef}
\begin{aligned}
& \widetilde{\ell}_1\left(\Psi_1\right)=P Q_B\Psi_1, \\
& \widetilde{\ell}_2\left(\Psi_1, \Psi_2\right)=P \ell_2\left(\Psi_1, \Psi_2\right), \\
& \widetilde{\ell}_3\left(\Psi_1, \Psi_2, \Psi_3\right)=P \ell_3\left(\Psi_1, \Psi_2, \Psi_3\right)-P \ell_2\left(\Psi_1, h \ell_2^{\s{G}}\left(\Psi_2, \Psi_3\right)\right)\\&\hspace{2.65cm}- (-1)^{d(\Psi_1)\left(d(\Psi_2)+d(\Psi_3)\right)} P \ell_2\left(\Psi_2, h \ell_2^{\s{G}}\left(\Psi_3, \Psi_1\right)\right) \\
&\hspace{2.65cm}-(-1)^{d(\Psi_3)\left(d(\Psi_1)+d(\Psi_2)\right)} P\ell_2\left(\Psi_3, h \ell_2^{\s{G}}\left(\Psi_1, \Psi_2\right)\right)~,
\end{aligned}
\end{equation}
and so on. 
In the next section, we will prove that $\left(P\widehat{\s{H}}_T,\{\widetilde{\ell}_n\}_{n=1}^\infty,\s{G}\right)$ is a $\s{G}$-twisted $L_{\infty}$-superalgebra. \par Notice that $h$ is BPZ self-conjugate: 
\begin{equation}\label{eq:hbpzconj}
\omega\left(\Psi_1, h \Psi_2\right)=(-1)^{d(\Psi_1)} \omega\left(h \Psi_1, \Psi_2\right),
\end{equation}
the factor of $(-1)^{d(\Psi_1)}$ appears because $h$ is grassmann odd. 
Let us prove that $\widetilde{\ell}_n$ are cyclic with respect to $\tilde{\omega}:=\omega|_{P\widehat{\s{H}}_T}$. 
We have for $\psi_i\in P\wh{\s{H}}$
\begin{equation}
\begin{aligned}
\tilde{\omega}\left(\psi_1, \wt{\ell}_2\left(\psi_2, \psi_3\right)\right) & =\tilde{\omega}\left(\psi_1, P \ell_2\left(\psi_2, \psi_3\right)\right) \\
& =\omega\left(P \psi_1, \ell_2\left(\psi_2, \psi_3\right)\right) \\
& =-(-1)^{d\left(\psi_1\right)} \omega\left(\ell_2\left(\psi_1, \psi_2\right), \psi_3\right) \\
& =-(-1)^{d\left(\psi_1\right)} \omega\left(\ell_2\left(\psi_1, \psi_2\right), P \psi_3\right) \\
& =-(-1)^{d\left(\psi_1\right)} \omega\left(P \ell_2\left(\psi_1, \psi_2\right), \psi_3\right) \\
& =-(-1)^{d\left(\psi_1\right)} \tilde{\omega}\left(\wt{\ell}_2\left(\psi_1, \psi_2\right), \psi_3\right),
\end{aligned}
\end{equation}
where we used the cyclicity of $\ell_n$ \eqref{eq:cycln}, $P^2=P$ and \eqref{eq:Pherm}. The same argument implies
\begin{equation}\label{eq:omPelln}
\begin{aligned}
\tilde{\omega}\left(\psi_1, P \ell \left(\psi_2, \ldots, \psi_{n+1}\right)\right) & =-(-1)^{d\left(\psi_1\right)} \tilde{\omega}\left(P \ell_n\left(\psi_1, \ldots, \psi_n\right), \psi_{n+1}\right) \\
& =-(-1)^{d\left(\psi_1\right)} \tilde{\omega}\left(\ell_n\left(\psi_1, \ldots, \psi_n\right), P \psi_{n+1}\right) \\
& =-(-1)^{d\left(\psi_1\right)} \tilde{\omega}\left(\ell_n\left(\psi_1, \ldots, \psi_n\right), \psi_{n+1}\right) .
\end{aligned}
\end{equation}
At next order, using \eqref{eq:omPelln} we have 
\begin{equation}
\begin{split}  \tilde{\omega}\left(\psi_1, \widetilde{\ell}_3\left(\psi_2, \psi_3, \psi_4\right)\right) &=-(-1)^{d\left(\psi_1\right)} \tilde{\omega}\left(P\ell_3\left(\psi_1, \psi_2, \psi_3\right), \psi_4\right) \\ &+(-1)^{d\left(\psi_1\right)} \tilde{\omega}\left(\ell_2\left(\psi_1, \psi_2\right), h \s{G} \ell_2\left(\psi_3, \psi_4\right)\right) \\ &+(-1)^{d\left(\psi_1\right)}(-1)^{d\left(\psi_2\right)d\left(\psi_3\right)} \tilde{\omega}\left(\ell_2\left(\psi_1, \psi_3\right), h \s{G} \ell_2\left(\psi_2, \psi_4\right)\right) \\ &+(-1)^{d\left(\psi_1\right)}\tilde{\omega}\left(\ell_2\left(\psi_1 ,h  \ell_2^{\s{G}}\left(\psi_2, \psi_3\right)\right), \psi_4\right)\\&=-(-1)^{d\left(\psi_1\right)}\big[\tilde{\omega}\left(P\ell_3\left(\psi_1, \psi_2, \psi_3\right), \psi_4\right) -\tilde{\omega}\left(\ell_2\left(h\ell_2^{ \s{G} }\left(\psi_1, \psi_2\right), \psi_3\right), \psi_4\right) \\&\hspace{2cm}-(-1)^{d\left(\psi_2\right)d\left(\psi_3\right)} \tilde{\omega} \left(\ell_2\left(h \ell_2^{\s{G}}\left(\psi_1, \psi_3\right), \psi_2\right), \psi_4\right) \\&\hspace{2cm}-\tilde{\omega}\left(\ell_2\left(\psi_1 ,h  \ell_2^{\s{G}}\left(\psi_2, \psi_3\right)\right), \psi_4\right)\big]\\&=-(-1)^{d\left(\psi_1\right)}\tilde{\omega}\left(\widetilde{\ell}_3\left(\psi_1, \psi_2, \psi_3\right), \psi_4\right) ~,
\end{split}    
\end{equation}
where we used the fact that $h,\ell_2$ are both degree odd with respect $d(\cdot)$ and graded-antisymmetry \eqref{eq:grantsymln} of $\ell_n$ in the first step and the cyclicity \eqref{eq:cycln} of $\ell_n$, its graded antisymmetry, $[h,\s{G}]=0$ and \eqref{eq:hbpzconj} in the second and third step. 
\\\\
As before, we can use \eqref{eq:4eomsR1R1tilde} and the solution $\psi$ to determine $\widetilde{\psi}$ up to a $Q_B$-exact field. The effective action which captures the dynamics of $\psi$ determined by the eom \eqref{eq:eompsi} of $\psi$ can be written as%\fixme{Can it be reproduced as in \cite{Erbin:2020eyc}?}
\begin{equation}\label{eq:steffaction}
\begin{split}
S^{\text{eff}}\left(\psi, \widetilde{\psi}\right)&=\frac{1}{g_s^2}\left[-\frac{1}{2}\tilde{\omega}\left(\widetilde{\psi},Q_B\s{G}\widetilde{\psi}\right)+\tilde{\omega}\left(\widetilde{\psi},Q_B\psi\right)+\sum_{n=1}^\infty\frac{1}{(n+1)!}\tilde{\omega}\left(\psi,\widetilde{\ell}_n(\psi^n)\right)\right]\\&=\frac{1}{g_s^2}\left[-\frac{1}{2}\tilde{\omega}(\widetilde{\psi},Q_B\s{G}\widetilde{\Psi})+\tilde{\omega}(\widetilde{\Psi},Q_B\psi)+\sum_{n=1}^\infty\frac{1}{n!} \int_0^1 dt~\tilde{\omega}(\dot{\psi}(t),\ell_n(\psi(t)^n))\right]~,
\end{split}
\end{equation}
where $\psi(t)$ is a smooth interpolation for $0\leq t\leq 1$ with $\psi(0)=0$ and $\psi(1)=\psi$.
Again $\widetilde{\psi}\in P\widetilde{\s{H}}_T$ is a non-dynamical field. It can easily be checked that this effective action has an infinite dimensional gauge symmetry given by 
\begin{equation}
\begin{aligned}
& \delta_{\lambda} \psi=Q_B \lambda+\sum_{n=2}^{\infty} \frac{1}{n !} \wt{\ell}_{n+1}^{\s{G}} \left(\lambda, \psi^{n}\right)~, \\
& \delta_{\lambda} \widetilde{\psi}=Q_B \widetilde{\lambda}+\sum_{n=2}^{\infty} \frac{1}{n !} \wt{\ell}_{n+1} \left(\lambda,\psi^{n}\right)~,
\end{aligned}
\end{equation}
where $|\lambda\rangle \in P\widehat{\mathcal{H}}_T$ and $|\widetilde{\lambda}\rangle \in P\widetilde{\mathcal{H}}_T$ both grassmann odd.
\\\\
We can capture the system in terms of a strong deformation retract (SDR), see Appendix \ref{app:sdr} for definitions. We have the projection operator
\begin{equation}\label{eq:Pimapdef}
\Pi: \widehat{\mathcal{H}}_T \longrightarrow P \widehat{\mathcal{H}}_T,
\end{equation}
and the natural inclusion map
\begin{equation}\label{eq:Imapdef}
    \mathrm{I}: P\widehat{\mathcal{H}}_T \longrightarrow 
      \widehat{\mathcal{H}}_T.
\end{equation}
The Hodge-Kodaira decomposition \eqref{eq:HKdecomphQB} and the annihilation conditions \eqref{eq:anncondhp} imply that we have a strong deformation retract (SDR)
\begin{equation}\label{eq:sdrHhatHtilde}
\rotatebox[origin=c]{270}{$\mathlarger{\mathlarger{\mathlarger{\mathlarger{\circlearrowright}}}}$}~(-h)\left(\widehat{\mathcal{H}}_T, Q_B\right) \underset{\mr{I}}{\stackrel{\Pi}{\myrightleftarrows{\rule{2cm}{0cm}}}}\left(P \widehat{\mathcal{H}}_T, \Pi Q_B\mr{I} \right),
\end{equation}
where the propagator $h$ is (minus) the homotopy operator, see Appendix \ref{app:sdr} for definitions. 
We also have maps 
\begin{equation}
\Pi: \widetilde{\mathcal{H}}_T \longrightarrow P \widetilde{\mathcal{H}}_T,\quad   \mathrm{I}: P\widetilde{\mathcal{H}}_T \longrightarrow 
      \widetilde{\mathcal{H}}_T,\quad h: \widetilde{\mathcal{H}}_T \longrightarrow 
      \widetilde{\mathcal{H}}_T.   
\end{equation}
This means that we have another SDR
\begin{equation}
\begin{split}
\rotatebox[origin=c]{270}{$\mathlarger{\mathlarger{\mathlarger{\mathlarger{\circlearrowright}}}}$}~(-h)\left(\widetilde{\s{H}}_T, Q_B\right) \underset{\mr{I}}{\stackrel{\Pi}{\myrightleftarrows{\rule{2cm}{0cm}}}}\left(P \widetilde{\s{H}}_T, \Pi Q_B \mr{I}\right)~.
\end{split}
\end{equation}
%The second SDR is essentially the dual of (a subspace of) the first SDR. 
\section{Effective action from homotopy transfer}\label{sec:4}
In this section, we will prove that $\left(P \widehat{\s{H}}_T,\{\widetilde{\ell}_n\}_{n=1}^{\infty},\s{G}\right)$ is a $\s{G}$-twisted $L_{\infty}$-superalgebra. We will prove this using the homological perturbation lemma and symmetrized tensor coalgebra language. See Appendix \ref{app:coalg} and Appendix \ref{app:sdr} for discussion on these topics. 
\subsection{Closed superstring field theory as symmetrized tensor coalgebra}
We begin by describing the closed superstring field theory in terms of symmetrized tensor coalgebra, see Appendix \ref{app:coalg} for review of definitions.  
Let $S\widetilde{\s{H}}_{T},S\widehat{\s{H}}_{T}$ be the symmetric algebra given by 
\begin{equation}
\begin{split}
&S\widetilde{\s{H}}_{T}=\widetilde{\s{H}}_{T}^{\wedge 0} \oplus \widetilde{\s{H}}_{T}^{\wedge 1} \oplus \widetilde{\s{H}}_{T}^{\wedge 2} \oplus \ldots\\& S\widehat{\s{H}}_{T}=\widehat{\s{H}}_{T}^{\wedge 0} \oplus \widehat{\s{H}}_{T}^{\wedge 1} \oplus \widehat{\s{H}}_{T}^{\wedge 2} \oplus \ldots~,
\end{split}
\end{equation}
where $\widetilde{\s{H}}_{T}^{\wedge 0}=\C\textbf{1}_{S\widetilde{\mathcal{H}}_T},~\widehat{\s{H}}_{T}^{\wedge 0}=\C\textbf{1}_{S\widehat{\mathcal{H}}_T}$ and $\textbf{1}_{S\widetilde{\mathcal{H}}_T},\textbf{1}_{S\widehat{\mathcal{H}}_T}$ is the identity of the coalgebra $S\widetilde{\mathcal{H}}_T,S\widehat{\mathcal{H}}_T$ respectively. As described in Appendix \ref{app:coalg}, these are coalgebras with the counit and coproduct given by \eqref{eq:counitTV} and \eqref{eq:coprodSV}. \par Let us now express the SDR \eqref{eq:sdrHhatHtilde} in terms of tensor coalgebra. Let $\bm{Q_B}$ denote the coderivation on $S\widehat{\s{H}}_T$ and $S\widetilde{\s{H}}_T$ obtained from \eqref{eq:coderbmci} using the BRST charge $Q_B$:
\begin{equation}
\bm{Q_B}|_{S\wh{\s{H}}}=\sum_{n\geq 1}\left(Q_B\wedge\mathds{1}_{\wh{\s{H}}^{\wedge n-1}}\right)\pi_n,\quad \bm{Q_B}|_{S\wt{\s{H}}}=\sum_{n\geq 1}\left(Q_B\wedge\mathds{1}_{\wt{\s{H}}^{\wedge n-1}}\right)\pi_n~,   
\end{equation}
where $\pi_n$ is the projection onto the $n^{th}$ homogenous subspace. 
We will simply denote it by $\bm{Q_B}$, the space it acts on will be clear from context. 
The projection $P$ and the maps in \eqref{eq:Pimapdef}, \eqref{eq:Imapdef} gives rise to linear maps on 
$S\widehat{\s{H}}_T$ and $S\widetilde{\s{H}}_T$. Define 
\begin{equation}
\mathbf{P}: S\widehat{\s{H}}_T \longrightarrow S\widehat{\s{H}}_T, \quad \mathbf{I}: SP\widehat{\s{H}}_T\longrightarrow S\widehat{\s{H}}_T,\quad \bm{\Pi}: S\widehat{\s{H}}_T \longrightarrow SP\widehat{\s{H}}_T,
\end{equation}
where $SP\widehat{\s{H}}_T$ is the symmetrized tensor coalgebra on the image $P\widehat{\s{H}}_T$ of $P$ in $\widehat{\s{H}}_T$,
by
\begin{equation}
\mathbf{P}=\sum_{k=0}^\infty \mathbf{P}_k,\quad    \mathbf{I}=\sum_{k=0}^\infty \mathbf{I}_k,\quad \bm{\Pi}=\sum_{k=0}^\infty \bm{\Pi}_k~,
\end{equation}
where 
\begin{equation}
\begin{aligned}
&\mathbf{P}_k:= \mathbf{P} \pi_k:=\frac{1}{k !} P^{\wedge k} \pi_k~, \\
& \mathbf{I}_k:= \mathbf{I} \pi_k:=\frac{1}{k !} \mr{I}^{\wedge k} \pi_k~, \\
& \bm{\Pi}_k:= \bm{\Pi} \pi_k:=\frac{1}{k !} \Pi^{\wedge k} \pi_k.
\end{aligned}
\end{equation}
Similarly we get linear maps $\mathbf{P},\bm{\Pi},\textbf{I}$ on $S\widetilde{\s{H}}_T$ corresponding to $P,\Pi,\mr{I}$ respectively. It is easy to check that 
\begin{equation}
    \mathbf{P}=\mathbf{I}\bm{\Pi},\quad \bm{\Pi}\mathbf{I}=\1,
\end{equation}
where $\1$ denotes the identity operator on $SP\widehat{\s{H}}_T$ or $SP\widetilde{\s{H}}_T$.
We also want to define the propagator $\bm{h}: S\widehat{\s{H}}_T \longrightarrow S\widehat{\s{H}}_T$ and $\bm{h}: S\widetilde{\s{H}}_T \longrightarrow S\widetilde{\s{H}}_T$ which satisfies the Hodge-Kodaira decomposition as operators on the symmetrized tensor coalgebra:
\begin{equation}
\bm{Q_B h}+\bm{hQ_B}=\mathds{1}_{S\widehat{\s{H}}_T}-\mathbf{P}~.
\end{equation}
The following choice does the trick \cite{Erbin:2020eyc} 
\begin{equation}
    \bm{h}:=\sum_{k=0}^\infty \bm{h}_k,
\end{equation}
where 
\begin{equation}
\bm{h}_k:=\bm{h} \pi_k:=\frac{1}{k !} \sum_{j=0}^{k-1}\left[h \wedge P^{\wedge j} \wedge\left(\mathds{1}_{\widehat{\s{H}}_T}\right)^{\wedge(k-1-j)}\right] \pi_k.
\end{equation}
With this choice, the annihilation conditions are satisfied:
\begin{equation}\label{eq:anncond}
    \bm{h}\mathbf{I}=\mathbf{P} \bm{h}=\bm{h}^2=0.
\end{equation}
Thus we have an SDR of the free theory
\begin{equation}\label{eq:sdrSHhatG}
\rotatebox[origin=c]{270}{$\mathlarger{\mathlarger{\mathlarger{\mathlarger{\circlearrowright}}}}$}~(-\bm{h})\left(S\widehat{\mathcal{H}}_T, \bm{Q_B}\right) \underset{\mathbf{I}}{\stackrel{\bm{\Pi}}{\myrightleftarrows{\rule{2cm}{0cm}}}}\left(SP \widehat{\mathcal{H}}_T, \bm{\Pi Q_B} \mathbf{I}\right).
\end{equation}
\begin{comment}
As before, restricting to $S\widehat{\mathcal{H}}_T^{\text{res}}$ still gives us an SDR. 
Similarly, we get an SDR for $S\widetilde{\s{H}}_T$:
\begin{equation}\label{eq:sdrSHtilde}
 \rotatebox[origin=c]{270}{$\mathlarger{\mathlarger{\mathlarger{\mathlarger{\circlearrowright}}}}$}~(-\bm{h})\left(S\widetilde{\s{H}}_T, \bm{Q_B}\right) \underset{\mathbf{I}}{\stackrel{\bm{\Pi}}{\myrightleftarrows{\rule{2cm}{0cm}}}}\left(SP \widetilde{\s{H}}_T, \bm{\Pi Q_B} \mathbf{I}\right).   
\end{equation}
\end{comment}
%Since $\s{G}$ commutes with $P$, we also have the restriction map\begin{equation}\bm{\s{G}}:SP\widetilde{\s{H}}_{T} \longrightarrow SP\widehat{\s{H}}_{T}.\end{equation}We now restrict to $\widehat{\s{H}}_{T}^{\text{res}}$. Using the ``inverse'' of the PCO $\s{G}^{-1}$ we define linear map \begin{equation}\bm{{}^\s{G}\ell}_n: S\widehat{\s{H}}_{T}^{\text{res}} \longrightarrow S\widetilde{\s{H}}_{T},\quad n>1~,    \end{equation}
As described in Appendix \ref{app:coalg} (Theorem \ref{thm:codtoend}), the string products $\ell_n^{\s{G}}: \wh{\mathcal{H}}_T^{\wedge n} \longrightarrow \widehat{\s{H}}_T$ for $n\geq 1$ with $\ell_1^{\s{G}}=Q_B$ gives us a coderivation
\begin{equation}
    \bm{\ell^{\s{G}}}=\mu\circ (\ell^{\s{G}}\boxtimes\1_{S\widehat{\s{H}}_{T}})\circ\Delta_{S\widehat{\s{H}}_{T}}:S\widehat{\s{H}}_{T}\longrightarrow S\widehat{\s{H}}_{T},\quad \ell^{\s{G}}=\sum_{n=1}^\infty \ell^{\s{G}}_n~,
\end{equation} 
where $\mu:S\widehat{\s{H}}_{T}\boxtimes S\widehat{\s{H}}_{T}\longrightarrow S\widehat{\s{H}}_{T}$ is the multiplication map and $\Delta_{S\widehat{\s{H}}_{T}}:S\widehat{\s{H}}_{T}\longrightarrow S\widehat{\s{H}}_{T}\boxtimes S\widehat{\s{H}}_{T}$ is the coproduct. 
Then the strong homotopy Jacobi identity is equivalent to the vanishing of the commutator 
\begin{equation}\label{eq:bmelg2=0}
(\bm{\ell^{\s{G}}})^2=\frac{1}{2}[\bm{\ell^{\s{G}}},\bm{\ell^{\s{G}}}]=0,   
\end{equation}
\par
For an even-degree vector $A$, introduce the notation 
\begin{equation}
    e^{\wedge A}:=\sum_{n=0}^\infty\frac{A^{\wedge n}}{n!}=\textbf{1}_{S\widehat{\mathcal{H}}_T}+A+\frac{A\wedge A}{2!}+\frac{A\wedge A\wedge A}{3!}+\dots
\end{equation}
To write the action \eqref{eq:stactionint} in the coalgebra langauge, we introduce the bra action of $\omega$ as:
\begin{equation}
\begin{split}
&\langle\omega|:\wh{\s{H}}_T\otimes \wt{\s{H}}_T\longrightarrow \C,\\&\langle\omega|\wt{\Psi}\otimes\Psi=\omega(\wt{\Psi},\Psi)~.
\end{split}
\end{equation}
Similarly $\langle\omega|:\wt{\s{H}}_T\otimes \wh{\s{H}}_T\longrightarrow \C$ is also defined. 
The action \eqref{eq:stactionint} can then be written as 
\begin{equation}\label{eq:stactioncoalg}
S_{\text{cl}}=\frac{1}{g_s^2}\left[-\frac{1}{2}\langle\omega|\widetilde{\Psi}\otimes Q_B\s{G}\widetilde{\Psi}+\langle\omega|\widetilde{\Psi}\otimes Q_B\Psi+\int_0^1 dt~\langle\omega|\pi_1\bm{\partial_t}e^{\wedge \Psi(t)}\otimes \mu\circ(\pi_1\boxtimes\pi_0)\circ\delta\bm{\ell}e^{\wedge \Psi(t)}\right]~,   
\end{equation}
where $\bm{\partial_t}$ is the coderivation corresponding to the linear map $\partial_t$.
Then the equation of motion obtained from this action takes the form
\begin{equation}
\begin{aligned}
& \operatorname{EOM}(\widetilde{\Psi})=Q_B(\Psi-\s{G} \widetilde{\Psi})~, \\
& \operatorname{EOM}(\Psi)=Q_B \widetilde{\Psi}+\mu\circ(\pi_1\boxtimes\pi_0)\circ\delta\bm{\ell}e^{\wedge \Psi(t)}~,
\end{aligned}
\end{equation}
which can be combined to give 
\begin{equation}
    \pi_1\bm{\ell^{\s{G}}}e^{\wedge \Psi}=0.
\end{equation}
This equation is identical to the EOM \eqref{eq:EOMPsi}.
\subsection{Perturbed SDR and effective string products}
We will now use the homological perturbation lemma to deduce that the effective string products $\wt{\ell}_n$ in \eqref{eq:lnGtildedef} satisfy the strong homotopy Jacobi identity. Introduce the perturbation 
\begin{equation}
\begin{split}
    \delta\bm{\ell}:&S\widehat{\mathcal{H}}_T\longrightarrow S\widetilde{\mathcal{H}}_T\boxtimes S\widehat{\mathcal{H}}_T\\
&\delta \bm{\ell}=\sum_{k>1} (\ell_k\boxtimes\1_{S\widehat{\mathcal{H}}_T})\circ\Delta_{S\widehat{\mathcal{H}}_T}~, 
\end{split}
\end{equation}
to the free SDR \eqref{eq:sdrSHhatG}. 
The PCO $\s{G}$ gives rise to a linear map 
\begin{equation}
\bm{\s{G}}:S\widetilde{\s{H}}_{T} \longrightarrow S\widehat{\s{H}}_{T}~,
\end{equation} as follows:
\begin{equation}\label{eq:bmgonSV}
    \bm{\s{G}}=\sum_{k=0}^\infty \bm{\s{G}}_k,\quad \bm{\s{G}}_k:= \bm{\s{G}} \pi_k:=\frac{1}{k !} \s{G}^{\wedge k} \pi_k~.
\end{equation}
Then from \eqref{eq:bmelg2=0}, the map 
\begin{equation}
\mu\circ(\bm{\s{G}}\boxtimes\1_{S\widehat{\s{H}}_{T}}):S\widetilde{\s{H}}_{T}\boxtimes S\widehat{\s{H}}_{T} \longrightarrow S\widehat{\s{H}}_{T}~,    
\end{equation}
satisfies 
\begin{equation}
    (\bm{Q_B}+\delta \bm{\ell^{\s{G}}})^2=(\bm{\ell^{\s{G}}})^2=0~,
\end{equation}
where 
\begin{equation}
 \delta \bm{\ell^{\s{G}}}:=\mu\circ(\bm{\s{G}}\boxtimes\1_{S\widehat{\s{H}}_{T}})\circ\delta\bm{\ell}~.   
\end{equation}
Moreover, since $\s{G}$ commutes with $P$, we also have the restriction map 
\begin{equation}
\bm{\s{G}}:SP\widetilde{\s{H}}_{T} \longrightarrow SP\widehat{\s{H}}_{T}~,    
\end{equation}
and it satisfies  
\begin{equation}
\mu\circ(\bm{\s{G}}\boxtimes\1_{SP\widehat{\s{H}}_{T}})\circ(\bm{\Pi}\boxtimes\bm{\Pi})=\bm{\Pi}\circ \mu\circ(\bm{\s{G}}\boxtimes\1_{S\widehat{\s{H}}_{T}})~.    
\end{equation}
Thus by homological perturbation Theorem \ref{thm:hpt}, we obtain the perturbed SDR:
\begin{equation}\label{eq:pertsdrSHhat}
\rotatebox[origin=c]{270}{$\mathlarger{\mathlarger{\mathlarger{\mathlarger{\circlearrowright}}}}$}~(-\widetilde{\bm{h}})\left(S\widehat{\mathcal{H}}_T, \bm{\ell^{\s{G}}}\right) \underset{\widetilde{\mathbf{I}}}{\stackrel{\widetilde{\bm{\Pi}}}{\myrightleftarrows{\rule{2cm}{0cm}}}}\left(SP \widehat{\mathcal{H}}_T, \bm{\Pi Q_B} \mathbf{I}+\delta\widetilde{\bm{\ell^{\s{G}}}}\right),
\end{equation}
where
\begin{equation}
\begin{aligned}
& \delta\widetilde{\bm{\ell}}=(\bm{\Pi}\boxtimes\bm{\Pi}) \delta \bm{\ell} \frac{1}{\mathds{1}_{S\widehat{\mathcal{H}}_T}+\bm{h} \delta\bm{\ell^{\s{G}}}} \mathbf{I}, \\
& \widetilde{\bm{h}}=\bm{h}-\bm{h}\delta\bm{\ell^{\s{G}}}\frac{1}{\mathds{1}_{S\widehat{\mathcal{H}}_T}+\bm{h} \delta\bm{\ell^{\s{G}}}} \bm{h}, \\
& \widetilde{\mathbf{I}}=\mathbf{I}-\bm{h}\delta\bm{\ell^{\s{G}}}\frac{1}{\mathds{1}_{S \widehat{\mathcal{H}}_T}+\bm{h} \delta\bm{\ell^{\s{G}}}} \mathbf{I}, \\
& \widetilde{\bm{\Pi}}=\bm{\Pi}-\bm{\Pi}\delta\bm{\ell^{\s{G}}} \frac{1}{\mathds{1}_{S \widehat{\mathcal{H}}_T}+ \bm{h}\delta\bm{\ell^{\s{G}}}}\bm{h}.
\end{aligned}
\end{equation}
and 
\begin{equation}
\delta\widetilde{\bm{\ell^{\s{G}}}}=\mu\circ(\bm{\s{G}}\boxtimes\1_{SP\widehat{\s{H}}_{T}})\circ\delta\widetilde{\bm{\ell}}.     
\end{equation}
Using the definition
\begin{equation}
    \frac{1}{1+A}=\sum_{n=0}^{\infty}(-A)^n,
\end{equation}
we get 
\begin{equation}
\begin{aligned}
\widetilde{\bm{h}} & =\frac{1}{\mathds{1}_{S \widehat{\mathcal{H}}_T}+\bm{h} \delta \bm{\ell^{\s{G}}}} \bm{h}, \\
\widetilde{\mathbf{I}} & =\frac{1}{\mathds{1}_{S \widehat{\mathcal{H}}_T}+\bm{h} \delta \bm{\ell^{\s{G}}}} \mathbf{I}, \\
\widetilde{\bm{\Pi}} & =\bm{\Pi} \frac{1}{\mathds{1}_{S \widehat{\mathcal{H}}_T}+\delta \bm{\ell^{\s{G}}}\bm{h}}.
\end{aligned}
\end{equation}
Since \eqref{eq:pertsdrSHhat} is an SDR, we have
\begin{equation}\label{eq:lGtilde2=0}
\left(\widetilde{\bm{\ell^{\s{G}}}}\right)^2=\frac{1}{2}\left[\widetilde{\bm{\ell^{\s{G}}}},\widetilde{\bm{\ell^{\s{G}}}}\right]=0, 
\end{equation}
where 
$\widetilde{\bm{\ell^{\s{G}}}}=\bm{\Pi Q_B} \mathbf{I}+\delta\widetilde{\bm{\ell^{\s{G}}}}$.
The effective products are given by 
\begin{equation}
   \wt{\ell}_1=PQ_B,\quad \wt{\ell}_n:=\mu\circ(\pi_1\boxtimes\pi_0)\circ\delta\widetilde{\bm{\ell}}\circ\pi_n,\quad n>1~,
\end{equation}
where the image $\wh{\s{H}}^{\wedge 0}\cong\C$ of $\pi_0$ is considered as the space $\wt{\s{H}}^{\wedge 0}\cong\C$.
\eqref{eq:lGtilde2=0} then implies that the effective products $\wt{\ell}_n$ satisfies the strong homotopy Jacobi identity. It is also straightforward to check that $\wt{\ell}_n$ defined above agrees with \eqref{eq:lnGtildedef}. 
As in \cite[Appendix B.2.3]{Erbin:2020eyc}, we have
\begin{equation}
\begin{split}
\Psi(\psi)&=\pi_1\widetilde{\mathbf{I}}e^{\wedge\psi}=\pi_1\frac{1}{\mathds{1}_{S \widehat{\mathcal{H}}_T}+\bm{h} \delta\bm{\ell^{\s{G}}}} \mathbf{I}e^{\wedge\psi}   \\& =  \psi-\frac{1}{2 !} h \ell_2^{\s{G}}(\psi, \psi)-\frac{1}{3 !} h \ell_3^{\s{G}}(\psi, \psi, \psi)+\frac{2}{(2 !)^2} h \ell_2^{\s{G}}\left(h \ell_2^{\s{G}}(\psi, \psi), \psi\right)+ \\
& -\frac{1}{4 !} h \ell_4^{\s{G}}(\psi, \psi, \psi, \psi)+\frac{2}{2 ! 3 !} h \ell_2^{\s{G}}\left(h \ell_3^{\s{G}}(\psi, \psi, \psi), \psi\right)+\frac{3}{2 ! 3 !} h \ell_3^{\s{G}}\left(h \ell_2^{\s{G}}(\psi, \psi), \psi, \psi\right)+ \\
& -\frac{1}{(2 !)^3} h \ell_2^{\s{G}}\left(h \ell_2^{\s{G}}(\psi, \psi), h \ell_2^{\s{G}}(\psi, \psi)\right)-\frac{2^2}{(2 !)^3} h \ell_2^{\s{G}}\left(h \ell_2^{\s{G}}\left(h \ell_2^{\s{G}}(\psi, \psi), \psi\right), \psi\right)+O\left(\psi^5\right) .
\end{split}    
\end{equation}
%Since $\widetilde{\bm{\ell^{\s{G}}}}$ is degree-odd, each $\widetilde{\ell}_n^{\s{G}}$ is degree-odd which can be chosen to be $-1$ by shifting the degree if necessary.  Then from \eqref{eq:lGtilde2=0} we see that the set of maps $\{\widetilde{\ell}^{\s{G}}_n\}_{n=1}^\infty$ satisfy the $L_\infty$ relations \eqref{eq:stjacidexp} and 
Thus we have shown that $\left(P \widehat{\s{H}}_T,\{\widetilde{\ell}_n\}_{n=1}^{\infty},\s{G}\right)$ is a $\s{G}$-twisted $L_{\infty}$-superalgebra.\\\\
The maps $\widetilde{\mathbf{I}},\widetilde{\bm{\Pi}}$ can be shown to be cohomomorphisms provided that we can prove that  
\begin{equation}\label{eq:cohomIPimainid}
\Delta_{S \wh{\mathcal{H}}}(\bm{h} \delta \boldsymbol{\ell^{\s{G}}})^k \mathbf{I}=\sum_{n=0}^k\left[(\bm{h} \delta \boldsymbol{\ell^{\s{G}}})^n \mathbf{I} \boxtimes(\bm{h} \delta \boldsymbol{\ell^{\s{G}}})^{k-n} \mathbf{I}\right] \Delta_{S P \wh{\mathcal{H}}}.
\end{equation}
An order-by-order proof of \eqref{eq:cohomIPimainid} appeared in \cite[Appendix B]{Erbin:2020eyc}. Since $\bm{\s{G}}$ commutes with $\mathbf{\Pi}$, using \eqref{eq:cohomIPimainid}, one can prove that $\delta\widetilde{\bm{\ell^{\s{G}}}}$ and hence $\widetilde{\bm{\ell^{\s{G}}}}$ is a coderivation. The effective action can be written using the effective string product as (cf. \eqref{eq:stactioncoalg}):
\begin{equation}
S^{\text{eff}}=\frac{1}{g_s^2}\left[-\frac{1}{2}\langle\tilde{\omega}|\widetilde{\psi}\otimes Q_B\s{G}\widetilde{\psi}+\langle\tilde{\omega}|\widetilde{\psi}\otimes Q_B\psi+\int_0^1 dt~\langle\tilde{\omega}|\pi_1\bm{\partial_t}e^{\wedge \psi(t)}\otimes \mu\circ(\pi_1\boxtimes\pi_0)\circ\delta\wt{\bm{\ell}}e^{\wedge \psi(t)}\right]~.       
\end{equation}
Unpacking this reproduces the action \eqref{eq:steffaction}.
\\\\
\textbf{Acknowledgements.} RKS would like to thank Ashoke Sen for suggesting this problem and numerous discussions on superstring field theory, Carlo Maccaferri for comments which improved the presentation of the paper and the anonymous referee for suggestions which improved certain aspects of the paper. RKS thanks ICTS, Bangalore for hospitality where this work began. The work of RKS was supported by the US Department of Energy under grant DE-SC0010008.
\begin{appendix}
\section{(Symmetrized) Tensor coalgebra}\label{app:coalg}
We begin by describing the general construction of the symmetrized tensor coalgebra, the reader is referred to \cite{bourbaki1998algebra} for more details. 
\par Throughout this section $\F$ is field of characteristic $\neq 2$. 
\begin{defn}
A \textit{coalgebra} over a field $\F$ is an $\F$-vector space $C$ along with $\F$-linear maps 
\begin{equation}
\Delta:C\longrightarrow C\otimes C,\quad \varepsilon: C\longrightarrow \F~,    
\end{equation}
called \textit{coproduct} and \textit{counit} respectively, 
such that 
\begin{equation}\label{eq:coasscouprop}
\begin{split}
&(\mathds{1}_C\otimes\Delta)\circ\Delta=(\Delta\otimes\mathds{1}_C)\circ\Delta~,\\&(\mathds{1}_C\otimes\varepsilon)\circ\Delta=\mathds{1}_C=(\varepsilon\otimes\mathds{1}_C)\circ\Delta.
\end{split}    
\end{equation}
The first and second relations in \eqref{eq:coasscouprop} are called \textit{coassociativity} and \textit{counit property} respectively. 
\end{defn}
Let $(C,\Delta,\varepsilon)$ and $(C',\Delta',\varepsilon')$ be two coalgebras. Then a \textit{cohomomorphism} between $(C,\Delta,\varepsilon)$ and $(C',\Delta',\varepsilon')$ is an $\F$-linear map $f:C\longrightarrow C'$ such that 
\begin{equation}
    \Delta'\circ f=(f\otimes f)\circ\Delta,\quad \varepsilon'\circ f=\varepsilon,
\end{equation}
that is, the following diagram commutes: 
\[\begin{tikzcd}
	C && {C'} && C && {C'} \\
	{C\otimes C} && {C'\otimes C'} &&& \F
	\arrow[from=1-1, to=1-3]
	\arrow["\Delta"', from=1-1, to=2-1]
	\arrow[from=2-1, to=2-3]
	\arrow["{\Delta'}", from=1-3, to=2-3]
	\arrow["f", from=1-5, to=1-7]
	\arrow["\varepsilon"', from=1-5, to=2-6]
	\arrow["{\varepsilon'}", from=1-7, to=2-6]
\end{tikzcd}\]
The identity map $\mathds{1}_C$ is clearly a cohomomorphism. 
An $\F$-linear endomorphism $d:C\longrightarrow C$ is called a \textit{coderivation} if it satisfies the \textit{co-Leibnitz rule}
\begin{equation}
    \Delta\circ d=(d\otimes\mathds{1}_C+\mathds{1}_C\otimes d)\circ \Delta.
\end{equation}
The space of all coderivations forms a vector space which we denote by $\text{Coder}(V)$. 
%Clearly the identity map $\mathds{1}_{C}$ is a cohomomorphism. 
An element $c\in C$ is called a \textit{group-like element} if 
\begin{equation}
    \Delta(c)=c\otimes c,\quad \varepsilon(c)=1.
\end{equation}
\begin{comment}
Define the \textit{twist automorphism} by
\begin{equation}\label{eq:twistaut}
\begin{aligned}
\theta: C \otimes C & \longrightarrow C \otimes C \\
(a \otimes b) & \longmapsto(-1)^{d(a) d(b)} b \otimes a~,
\end{aligned}
\end{equation}
where $a, b$ are homogeneous vectors and this action can be extended to all of $C \otimes C$ by linearity. A coalgebra is said to be \textit{cocommutative} if $\theta \cdot \Delta_V=\Delta_V$. 
\end{comment}
The canonical example of coalgebra associated to a vector space is the \textit{tensor coalgebra}. Let $V$ be an $\F$-vector space and let $TV$ denote the tensor algebra 
\begin{equation}
TV=\bigoplus_{n=0}^\infty V^{\otimes n},     
\end{equation}
where $V^{\otimes 0}=\F\mathbf{1}_{TV}\cong\F$ and $\mathbf{1}_{TV}$ should be considered as the identity of the  tensor algebra. The coproduct is given by 
\begin{equation}
    \begin{split}
    \Delta_{TV}:TV&\longrightarrow TV\boxtimes TV\\v_1\otimes v_2\otimes\dots\otimes v_n\mapsto \Delta_{TV}(v_1\otimes v_2\otimes\dots\otimes v_n)&:=\sum_{k=0}^n ( v_1\otimes\dots\otimes v_k)\boxtimes (v_{k+1}\otimes \dots\otimes v_{n})
    \end{split}
\end{equation}
where we have used $\boxtimes$ to denote an ``external'' tensor product required to define the coproduct and differentiate it from the ``internal'' tensor product already used in defining the $TV$. The counit is given by 
\begin{equation}\label{eq:counitTV}
\begin{split}
    &\varepsilon:TV\longrightarrow \F,\\ a\mapsto a\textbf{1}_{TV},~~v\mapsto 0,\quad &a\textbf{1}_{TV}\in T^0V=\F\textbf{1}_{TV},\quad v\in\bigoplus_{n=1}^\infty T^nV.
\end{split}
\end{equation}
This notion of tensor coalgebra is useful to describe the open string field theory \cite{Gaberdiel:1997ia,Kajiura:2003ax,Kajiura:2004xu,Erbin:2020eyc}.\\\\ We now come to the main example relevant for our purposes. We will assume that $V$ is $\Z$-graded. 
Let $SV$ denote the symmetrized tensor algebra defined by (see Subsection \ref{sec:genLinfalg} for notations)
\begin{equation}
SV=\bigoplus_{n=0}^\infty V^{\wedge n}.    
\end{equation}
%Let $SV^\star$ be defined similarly. and  $V^{\star\wedge 0}$ is defined similarly 
Here $V^{\wedge 0}=\F\mathbf{1}_{SV}\cong\F$. Here as before $\mathbf{1}_{SV}$ should be considered as the identity of the symmetric tensor algebra. The coproduct is given by 
\begin{equation}\label{eq:coprodSV}
\begin{split}
\Delta_{SV}:&SV\longrightarrow SV\boxtimes SV~,\\
\Delta_{SV}\left(v_1 \wedge \ldots \wedge v_n\right)&=\sum_{\substack{i, j\in \N_0 \\ i+j=n}} \sum_{\sigma \in S_{n}} \frac{\chi(\sigma,v_1,\dots,v_n)}{i! j!} \\&\hspace{1cm}\times\left(v_{\sigma(1)}\wedge \ldots \wedge v_{\sigma\left(i\right)}\right)\boxtimes \left(v_{\sigma(i+1)}\wedge \ldots \wedge v_{\sigma\left(i+j\right)} \right)\\&=\sum_{\substack{i, j \in \mathbb{N}_0 \\ i+j=n}} \sum_{\sigma \in \operatorname{UnShuff}(i, j)} \chi\left(\sigma, v_1, \cdots, v_n\right) \\&\hspace{1cm}\times    \left(v_{\sigma(1)}\wedge\cdots\wedge v_{\sigma(i)}\right)\boxtimes\left( v_{\sigma(i+1)}\wedge \cdots\wedge v_{\sigma(n)}\right)
\end{split}
\end{equation}
where $\mathbb{N}_0=\N\cup\{0\}$ and $\boxtimes$ again denotes the external tensor product. 
The counit is same as for the tensor coalgebra \eqref{eq:counitTV}. It turns out that the vector space Coder($SV$) of coderivations of $SV$ is isomorphic to the vector space of linear maps Hom$(SV,V)$. The precise map is given by the following theorem.
\begin{thm}\label{thm:codtoend}
The following map 
\begin{equation}
\begin{split}
\mathrm{Hom}(SV,V)&\longrightarrow \mathrm{Coder}(SV),\\f&\longmapsto \mu\circ(f\boxtimes\1_{SV})\circ\Delta_{SV}~,
\end{split}    
\end{equation}
where $\mu:SV\boxtimes SV\longrightarrow SV$ is the multiplication map, is an isomorphism of vector spaces with the inverse given by
\begin{equation}
    \begin{split}
     \mathrm{Coder}(SV) &\longrightarrow \mathrm{Hom}(SV,V)  \\d &\longmapsto \pi_1\circ d~,    
    \end{split}
\end{equation}
where
\begin{equation}
\begin{split}
&\pi_n:SV\longrightarrow V^{\wedge n}.%\pi_n:SV^\star\longrightarrow V^{\star\wedge n}.
\end{split}  
\end{equation}
is the projection of $SV$ to the 
$n^{\text{th}}$-homogenous component $V^{\wedge n}$.
\end{thm}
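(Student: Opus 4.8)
The plan is to show that the two maps in the statement are mutually inverse $\F$-linear maps; once that is done, each is automatically a linear isomorphism onto its target. Write $\Phi(f)=\mu\circ(f\boxtimes\1_{SV})\circ\Delta_{SV}$ and $\Psi(d)=\pi_1\circ d$. Both are manifestly linear in their argument, and $\Psi(d)$ obviously lands in $\mathrm{Hom}(SV,V)$ since $\pi_1$ has image $V^{\wedge 1}=V$. So the content is: (i) $\Phi(f)$ is a coderivation; (ii) $\Psi\circ\Phi=\mathrm{id}_{\mathrm{Hom}(SV,V)}$; (iii) $\Phi\circ\Psi=\mathrm{id}_{\mathrm{Coder}(SV)}$.

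For (i), the key structural input is that $\bigl(SV,\mu,\Delta_{SV}\bigr)$ is a graded bialgebra: the wedge multiplication $\mu$ and the coproduct \eqref{eq:coprodSV} are compatible (the coproduct of a product is the appropriately shuffled product of the coproducts), $\Delta_{SV}$ is coassociative by \eqref{eq:coasscouprop}, it is symmetric under exchange of the two external tensor factors, and on $V=V^{\wedge 1}$ it acts by $v\mapsto \1_{SV}\boxtimes v+v\boxtimes\1_{SV}$, so that $\Delta_{SV}\circ f$ is ``primitive-valued''. Feeding these identities into $\Delta_{SV}\circ\Phi(f)=\Delta_{SV}\circ\mu\circ(f\boxtimes\1_{SV})\circ\Delta_{SV}$ and regrouping gives $\Delta_{SV}\circ\Phi(f)=\bigl(\Phi(f)\boxtimes\1_{SV}+\1_{SV}\boxtimes\Phi(f)\bigr)\circ\Delta_{SV}$, i.e.\ the co-Leibniz rule. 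This is a routine but slightly lengthy diagram chase; the grading signs $\chi(\sigma,\dots)$ propagate harmlessly because $\mu$, $\Delta_{SV}$ and the flip are built to respect them.

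For (ii), evaluate $\Psi(\Phi(f))=\pi_1\circ\mu\circ(f\boxtimes\1_{SV})\circ\Delta_{SV}$ on a homogeneous element $v_1\wedge\dots\wedge v_n$. Among the unshuffle summands of $\Delta_{SV}(v_1\wedge\dots\wedge v_n)$, the only one that survives $f\boxtimes\1_{SV}$, then $\mu$, then projection onto $V^{\wedge 1}$ is the $(n,0)$-unshuffle term $(v_1\wedge\dots\wedge v_n)\boxtimes\1_{SV}$, since $f$ takes values in $V$ and wedging with a nonempty word raises the tensor degree above $1$. Hence $\Psi(\Phi(f))(v_1\wedge\dots\wedge v_n)=f(v_1\wedge\dots\wedge v_n)$, and likewise on $V^{\wedge 0}$; so $\Psi\circ\Phi=\mathrm{id}$.

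The crux is (iii). Given a coderivation $d$, set $D=d-\Phi(\pi_1\circ d)$; by (i) and (ii), $D$ is a coderivation with $\pi_1\circ D=0$, and it suffices to prove $D=0$. I would argue by strong induction on $n\ge 0$ that $D$ vanishes on $V^{\wedge n}$. Applying $\Delta_{SV}$ to $D$ of a degree-$n$ element and using co-Leibniz: in $\Delta_{SV}(v_1\wedge\dots\wedge v_n)$ every summand other than the two extreme ones $\1_{SV}\boxtimes(v_1\wedge\dots\wedge v_n)$ and $(v_1\wedge\dots\wedge v_n)\boxtimes\1_{SV}$ has both external tensor factors of degree strictly less than $n$, hence is killed by $D\boxtimes\1_{SV}+\1_{SV}\boxtimes D$ by the inductive hypothesis (including the $n=0$ base, where only the two extreme terms occur, and using $D(\1_{SV})=0$, itself forced by the $n=0$ step). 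What remains shows $D(v_1\wedge\dots\wedge v_n)$ is a primitive element of $SV$; since the primitives of the symmetrized tensor coalgebra are exactly $V=V^{\wedge 1}$, we get $D(v_1\wedge\dots\wedge v_n)=\pi_1\bigl(D(v_1\wedge\dots\wedge v_n)\bigr)=0$. This closes the induction and gives $D=0$. I expect the two genuine obstacles to be the bialgebra bookkeeping in (i) and the input ``primitives of $SV$ equal $V$'' used in (iii); everything else is formal.
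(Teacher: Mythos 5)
The paper does not actually prove Theorem \ref{thm:codtoend}: it is stated as a known fact about symmetrized tensor coalgebras and implicitly deferred to the cited literature (Bourbaki), so there is no in-paper argument to compare yours against. Your proof is the standard one and is essentially correct: (ii) is a clean degree count on the unshuffle coproduct, and (iii) is the usual induction showing that a coderivation $D$ with $\pi_1\circ D=0$ sends everything to primitive elements and hence vanishes. Two small caveats are worth recording. First, the input ``primitives of $SV$ are exactly $V^{\wedge 1}$'' genuinely requires the characteristic of $\F$ to be zero (or at least prime to all binomial coefficients that occur): with the coproduct \eqref{eq:coprodSV} one has $\Delta_{SV}(v^{\wedge n})=\sum_{i+j=n}\binom{n}{i}\,v^{\wedge i}\boxtimes v^{\wedge j}$ for $v$ even, so in characteristic $p$ the element $v^{\wedge p}$ is primitive and the injectivity argument in (iii) breaks; the paper's blanket hypothesis $\operatorname{char}\F\neq 2$ is not enough, though for the intended application $\F=\C$ this is moot. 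Second, in (i) the co-Leibniz rule for $\Phi(f)$ really does need both the bialgebra compatibility of $\mu$ with $\Delta_{SV}$ \emph{and} cocommutativity (to move the $f(x_{(1)})$ produced in the right-hand external factor back into standard position); you invoke both, but the cocommutativity is the step where the Koszul signs must be tracked, so it deserves the explicit mention you give it rather than being folded into ``routine.'' With those provisos the argument is complete and, if anything, more self-contained than what the paper offers.
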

Linear maps in $\mathrm{Hom}(SV,V)$ are called symmetric multilinear maps. 
\begin{comment}
\\\\One can show that the symmetrized tensor coalgebra $SV$ is cocommutative \cite{bourbaki1998algebra}. If we then define the operator $T$ by
\begin{equation}
T=\frac{\1_V \boxtimes \1_V+\theta}{2}~, 
\end{equation}
then $T\circ\Delta_{SV}=\Delta_{SV}$. 
Note that for any degree even maps $f, g$
\begin{equation}\label{eq:Tcommfog}
T \circ(f \boxtimes g+g \boxtimes f)=(f \boxtimes g+g \boxtimes f) \circ T
\end{equation}
Also if $d(a)$ or $d(b)$ is even then
\begin{equation}\label{eq:Tonaob}
T(a \boxtimes b+b\boxtimes a)=a \boxtimes b+b \boxtimes a.
\end{equation}
\subsection{$L_\infty$-algebra in tensor coalgebra language}
\end{comment}\label{app:a1}
The notion of a $g$-twisted $L_\infty$-superalgebra $(V,\{\ell_n\}_{n=1}^\infty,g)$ can be succintly expressed as the existence of a certain coderivation that is equivalent to the $n$-ary brackets on the symmetrized tensor coalgebra. %We will now describe this in detail. We will need some notations to do so. 
Given a symmetric multilinear maps $f_i:V^{\wedge i}\longrightarrow V$, we can define an associated coderivation $\bm{f}:SV\longrightarrow SV$ using Theorem \ref{thm:codtoend}. First define the total linear map 
\begin{equation}
    f=\sum_{i=1}^\infty f_i~,
\end{equation}
where $f_i$ is nonvanishing only on $V^{\wedge i}$. Then the linear map $\bm{f}=\mu\circ(f\boxtimes \1_{SV})\circ\Delta_{SV}$ is the associated coderivation. One can decompose $\bm{f}$ into coderivations $\bm{f}_i$:
\begin{equation}
  \bm{f}=\sum_{i=1}^\infty \bm{f}_i~,  
\end{equation}
which vanishes on $V^{\wedge n}$ for $n<i$ and acts as 
\begin{equation}\label{eq:coderbmci}
    \bm{f}_i=\sum_{n\geq i}\left(f_i\wedge\mathds{1}_{V^{\wedge n-i}}\right)\pi_n~,
\end{equation}
on $SV$, see \eqref{eq:albewed} for the definition of $f_i\wedge\mathds{1}_{V^{\wedge n-i}}$.
\begin{comment}
 It is easy to show that $\bm{c}_i$ satisfies the co-Leibnitz rule. Given symmetric multilinear maps $c_i:V^{\wedge i}\longrightarrow W$ and an invertible map $g:W\longrightarrow V$, we can extend $c_i$ to an $\F$-linear map $\bm{{}^{g}c}_i:SV\longrightarrow SW$ by the formula 
\begin{equation}\label{eq:bmgci}
    \bm{{}^{g}c}_i=\sum_{n\geq i}\left(\sum_{j=0}^{n-i}\frac{1}{j!(n-i-j)!}(g^{-1})^{\wedge j}\wedge c_i\wedge(g^{-1})^{\wedge n-i-j}\right)\pi_n,
\end{equation}
and $\bm{{}^{g}c}_i$ vanishes on $V^{\wedge n}$ for $n<i$. We can also define a coderivation $\bm{c^{g}}_i:SV\longrightarrow SV$ as in \eqref{eq:coderbmci} using the symmetric multilinear maps $g\circ c_i:V^{\wedge i}\longrightarrow V$. Let $\bm{g}:SW\longrightarrow SV$ be the cohomomorphism associated to $g$ defined as follows: 

Then it is easy to see that 
\begin{equation}\label{eq:gcirgci}
    \bm{g}\circ \bm{{}^{g}c}_i=\bm{c^{g}}_i.
\end{equation}
\end{comment}
\par
The sum of the $n$-ary brackets $\ell_n^g:V^{\wedge n}\longrightarrow V$ for\footnote{Recall that $\ell_1^g=\ell_1$.} $n\geq 1$ 
\begin{equation}
    \ell^g=\sum_{n=1}^\infty \ell_n^g~,
\end{equation}
gives us a linear map $\ell^g:SV\longrightarrow V$. Then by Theorem \ref{thm:codtoend}, we get a coderivation $\bm{\ell^g}:SV\longrightarrow SV$. 
\begin{comment}
\footnote{Note that $(SV)^\star\cong S(V^\star)$, so we can use the notation $SV^\star$ to mean any of the two spaces.} 
Define the \textit{total bracket} by 
\begin{equation}\label{eq:lgVtoV}
    \bm{{}^g\ell}:=\sum_{n=2}^\infty\bm{{}^g\ell}_n. 
\end{equation}
 and 
\begin{equation}\label{eq:lgingcgl}
\bm{\ell^g}=\bm{\ell}_1+\bm{g}\circ\bm{{}^g\ell}=\bm{\ell}_1+\sum_{n=1}^\infty \bm{g}\circ \bm{{}^g\ell}_n=\sum_{n=1}^\infty\bm{\ell^g}_n,    
\end{equation}
where we defined $\bm{\ell^g}_1:=\bm{\ell}_1.$
\end{comment}
Then the strong homotopy Jacobi identity is equivalent to the equation%\fixme{How is the graded commutator defined? $\ell^{\s{G}}$ does not have a well defined grade.} 
\begin{equation}
(\bm{\ell^g})^2=\frac{1}{2}[\bm{\ell^g},\bm{\ell^g}]=0~,    
\end{equation}
%\fixme{Check the $L_\infty$ relations!}
where $[\cdot,\cdot]$ is the graded commutator defined by \eqref{eq:gradcommdef} (see Remark \ref{rem:lilj=1/2[]}).
\section{Homological perturbation lemma}\label{app:sdr}
In this section, we review the definitions of (strong) deformation retracts and the homological perturbation lemma. 
We follow \cite{crainic2004perturbation}, \cite[Appendix A]{Erbin:2020eyc} for this description. Let $V,W$ be two $\Z$-graded vector spaces. \begin{defn}\label{defn:SDR}
A homotopy equivalence (HE) is a pair\footnote{The general notion of homotopy equivalence is defined for arbitrary pair of complexes. We will consider the pair $(V,d_V)$ as a (co)chain complex with a single nontrivial (co)chain group.} $(V,d_V)$ and $(W,d_W)$ with degree-odd linear maps $d_V:V\longrightarrow V$ and $d_W:W\longrightarrow W$ along with two linear maps $\pi:V\longrightarrow W$ and $\iota:W\longrightarrow V$ satisfying the following conditions
\begin{enumerate}
    \item $d_V^2=d_W^2=0$ and $\pi,\iota$ are quasi-isomorphisms.\footnote{A quasi-isomorphism is a chain map between (co)chain complexes which descends to an isomorphism of (co)homology. Here $\pi,\iota$ are isomorphisms of the (co)homology on $V,W$ induced by $d_V,d_W$ respectively.}
    \item There exists a degree-odd linear map $\eta:V\longrightarrow V$ called the \textit{homotopy} map satisfying the \textit{Hodge-Kodaira decomposition}
    \begin{equation}\label{eq:hkdec}
        \iota\pi-\1_V=\eta d_V+d_V\eta.
    \end{equation}
\end{enumerate}
A \textit{deformation retract} (DR) is an HE with the requirement $\pi\iota=\1_W$. A \textit{strong deformation retract} (SDR) is a DR such that the homotopy map $\eta$ satisfies the \textit{annihilation conditions}
\begin{equation}\label{eq:anncond}
    \eta^2=\eta \iota=\pi \eta=0.
\end{equation}
\end{defn} 
We will represent an HE, DR or SDR by the compact diagram 
\begin{equation}\label{eq:sdrVW}
\rotatebox[origin=c]{270}{$\mathlarger{\mathlarger{\mathlarger{\mathlarger{\circlearrowright}}}}$}~\eta\left(V, d_V\right) \underset{\iota}{\stackrel{\pi}{\myrightleftarrows{\rule{2cm}{0cm}}}}\left(W, d_W\right)
\end{equation}
and specify whether it is an HE, DR or an SDR. 
\begin{remark}
The condition that $\pi,\iota$ be quasi-isomorphisms automatically require them to be (co)chain maps. In particular, they must satisfy
\begin{equation}\label{eq:chainmapprop}
    \begin{aligned}
d_W \pi & =\pi d_V, \\
\iota d_W & =d_V \iota.
\end{aligned}
\end{equation}
\end{remark}
\begin{comment}
Suppose we have an SDR as in \eqref{eq:sdrVW}. Then there is a natural SDR associated to it, which we will call the \textit{dual} SDR:
\begin{equation}\label{eq:sdrV*W*}
\rotatebox[origin=c]{270}{$\mathlarger{\mathlarger{\mathlarger{\mathlarger{\circlearrowright}}}}$}~\eta^\star\left(V^\star, d_V^\star\right) \underset{\pi^\star}{\stackrel{\iota^\star}{\myrightleftarrows{\rule{2cm}{0cm}}}}\left(W^\star, d_W^\star\right).
\end{equation}
Here $V^\star$ is the dual space of $V$ and if $f:X\longrightarrow Y$ is a linear map between vector spaces $X,Y$, then $f^\star:Y^\star\longrightarrow X^\star$ denotes the dual of $f$ defined by $f^\star(\varphi)=\varphi\circ f$ for $\varphi\in Y^\star$. The dual of a map is contravariant under composition:
\begin{equation}
    (f\circ g)^\star=g^\star\circ f^\star.
\end{equation}
Using this property, one can verify the axioms of Definition \ref{defn:SDR} for \eqref{eq:sdrV*W*}.\par
\subsection{A slight generalisation of homological perturbation lemma}
We now review the classical homological perturbation lemma and a slight generalisation which is suited for application to this paper. 
\end{comment}
\par
The homological perturbation lemma gives us a recipe to construct a new SDR from a given SDR. 
We are interested in the following version of the homological perturbation lemma.
\begin{thm}\label{thm:hpt}
Let 
\begin{equation}\label{eq:sdrVW}
\rotatebox[origin=c]{270}{$\mathlarger{\mathlarger{\mathlarger{\mathlarger{\circlearrowright}}}}$}~\eta\left(V, d_V\right) \underset{\iota}{\stackrel{\pi}{\myrightleftarrows{\rule{2cm}{0cm}}}}\left(W, d_W\right)
\end{equation}
be an HE (resp. DR, resp. SDR), $V',W'$ vector spaces and 
\begin{equation}
\begin{split}
&\delta_V:V\longrightarrow V'\otimes V,\quad g:V'\otimes V\longrightarrow V\\&\pi':V'\longrightarrow W',\quad g':W'\otimes W\longrightarrow W~,
\end{split}
\end{equation}
be linear maps satisfying the following conditions:
\begin{enumerate}
    \item $\Tilde{d}_V^2=(d_V+g\delta_V)^2=0$.
    \item The operator $(\1_V-\eta g \delta_V)$ is invertible.
    \item $g'(\pi'\otimes\pi)=\pi g$.
\end{enumerate}
Then the perturbed data 
\begin{equation}\label{eq:pertsdrVW}
\rotatebox[origin=c]{270}{$\mathlarger{\mathlarger{\mathlarger{\mathlarger{\circlearrowright}}}}$}~\tilde{\eta}\left(V, \tilde{d}_V\right) \underset{\tilde{\iota}}{\stackrel{\tilde{\pi}}{\myrightleftarrows{\rule{2cm}{0cm}}}}\left(W, \tilde{d}_W\right)~,
\end{equation}
is again an HE (resp. DR, resp. SDR) with the following definitions:
\begin{equation}
\begin{aligned}
\delta_W & =(\pi'\otimes\pi) \delta_V \frac{1}{\1_V-\eta g \delta_V} \iota, \\
\tilde{\iota}& =\iota+\eta g\delta_V \frac{1}{\1_V-\eta g\delta_V} \iota, \\
\tilde{\pi} & =\pi+\pi g\delta_V \frac{1}{\1_V-\eta g\delta_V} \eta, \\
\tilde{\eta} & =\eta+\eta g\delta_V \frac{1}{\1_V-\eta g\delta_V} \eta,
\end{aligned}
\end{equation}
and $\tilde{d}_W=d_W+\tilde{g}\delta_W$, where 
\begin{equation}
(\1_V-\eta g\delta_V)^{-1}=\frac{1}{\1_V-\eta g\delta_V}=\sum_{n=0}^\infty (\eta g\delta_V)^n.    
\end{equation}
The map $\delta_V$ is called a \emph{perturbation} of the HE (resp. DR, resp. SDR) \eqref{eq:sdrVW}.
\end{thm}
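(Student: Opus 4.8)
The plan is to reduce this ``twisted'' version of the homological perturbation lemma to the classical one. Observe that the data of the theorem is exactly the classical setup once we set $\delta_V^{\mathrm{cl}}:=g\circ\delta_V:V\longrightarrow V$, which is a genuine endomorphism of $V$. Condition (1) says precisely that $d_V+\delta_V^{\mathrm{cl}}$ squares to zero, and condition (2) says $(\1_V-\eta\,\delta_V^{\mathrm{cl}})$ is invertible — these are the two hypotheses of the standard perturbation lemma. Applying the classical lemma (for HE, DR, or SDR respectively) to the perturbation $\delta_V^{\mathrm{cl}}$ of the SDR \eqref{eq:sdrVW} immediately yields a new SDR with $\tilde\eta,\tilde\iota,\tilde\pi$ exactly as written (since $\eta\,\delta_V^{\mathrm{cl}}=\eta g\delta_V$ everywhere), and with perturbed differential $d_W+\delta_W^{\mathrm{cl}}$ where $\delta_W^{\mathrm{cl}}=\pi\,\delta_V^{\mathrm{cl}}\,\tfrac{1}{\1_V-\eta\delta_V^{\mathrm{cl}}}\,\iota=\pi g\,\delta_V\,\tfrac{1}{\1_V-\eta g\delta_V}\,\iota$. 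So the only content beyond the classical statement is the factorization of this perturbed differential through $V'\otimes V$ and $W'\otimes W$, i.e.\ the claim that $d_W+\delta_W^{\mathrm{cl}}=d_W+\tilde g\,\delta_W$ with $\delta_W=(\pi'\otimes\pi)\,\delta_V\,\tfrac{1}{\1_V-\eta g\delta_V}\,\iota$.

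For this I would compute directly. Write $N:=(\1_V-\eta g\delta_V)^{-1}$. Then
\[
\delta_W^{\mathrm{cl}}=\pi\,g\,\delta_V\,N\,\iota=\pi g\,(\delta_V N\iota).
\]
Now I want to insert condition (3), $g'(\pi'\otimes\pi)=\pi g$, which relates $\pi g:V'\otimes V\to W$ to $g':W'\otimes W\to W$ precomposed with $\pi'\otimes\pi$. The subtlety is that $\delta_V N\iota$ lands in $V'\otimes V$, and one must check that applying $\pi'\otimes\pi$ to it gives exactly $\delta_W$ and that $\tilde g$ (which should be read off as $g'$) intertwines correctly. Concretely, $\tilde g\,\delta_W=g'\,(\pi'\otimes\pi)\,\delta_V N\iota=(\pi g)\,\delta_V N\iota=\delta_W^{\mathrm{cl}}$, using condition (3) at the second equality. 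This is the key step; everything else is bookkeeping. One should double-check the grading/sign conventions: the external tensor $V'\otimes V$ carries the tensor grading, $\delta_V$ is degree-odd in the relevant example, and condition (3) must be read as an identity of graded maps, so no extra Koszul sign appears in the manipulation above because $\pi'\otimes\pi$ is applied ``on the outside'' and $g'$ simply replaces $\pi g$.

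Finally I would remark that the first two displayed formulas for $\tilde\iota,\tilde\pi,\tilde\eta$ carry over verbatim from the classical lemma with $\delta_V$ replaced by $g\delta_V$, and that the SDR axioms (Hodge--Kodaira decomposition \eqref{eq:hkdec} and the annihilation conditions \eqref{eq:anncond}) for the perturbed data are inherited from the classical statement — no re-proof is needed once the reduction is in place. The main obstacle, as indicated, is purely the verification that the classically-perturbed differential factors as $\tilde g\,\delta_W$; this is where condition (3) is used and where one must be careful that the tensor-factor $V$ in $V'\otimes V$ is passed through $\pi$ while $V'$ is passed through $\pi'$, so that the composite $g'(\pi'\otimes\pi)$ matches $\pi g$ on the nose. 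I do not expect any genuine difficulty here, only careful tracking of which factor goes where.
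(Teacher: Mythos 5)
Your reduction to the classical homological perturbation lemma via the composite perturbation $\delta_V^{\mathrm{cl}}:=g\circ\delta_V$ is exactly the paper's own proof, which consists of citing the classical lemma ``with perturbation $g\delta_V$.'' Your additional verification --- that condition (3) gives $\tilde g\,\delta_W=g'(\pi'\otimes\pi)\,\delta_V\,(\1_V-\eta g\delta_V)^{-1}\iota=\pi g\,\delta_V\,(\1_V-\eta g\delta_V)^{-1}\iota$, so the factored differential agrees with the classically perturbed one (reading the otherwise undefined $\tilde g$ as $g'$) --- correctly supplies the one detail the paper leaves implicit.
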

\begin{proof}
The proof is identical to \cite[Appendix A.3]{Erbin:2020eyc}, and \cite{crainic2004perturbation} with perturbation $g\delta_V$. 
\end{proof}
\end{appendix}
\bibliography{main.bib}
\end{document}